\documentclass[10pt]{amsart}

\usepackage{amsmath,amsfonts,amscd,amssymb,epsf}
\newtheorem{theorem}{Theorem}[section]
\newtheorem{proposition}[theorem]{Proposition}
\newtheorem{lemma}[theorem]{Lemma}

\theoremstyle{definition}

\theoremstyle{remark} 
\numberwithin{equation}{section}


%
%


\newcommand{\Z}{{\mathbb{Z}}}

\newcommand{\R}{\mathbb{R}}

\newcommand{\pa}{\partial}

\begin{document}

\title{On the minima and convexity of   Epstein Zeta function}
\author{S.C. Lim$^1$}\email{$^1$sclim@mmu.edu.my}\author{L.P.
Teo$^{2}$}\email{$^2$lpteo@mmu.edu.my}

\keywords{Epstein Zeta function, convexity, minima}
\subjclass[2000]{Primary 11E45, 26B15 } \maketitle

\noindent {\scriptsize \hspace{1cm}$^1$Faculty of Engineering,
Multimedia University, Jalan Multimedia, }

\noindent {\scriptsize \hspace{1.1cm} Cyberjaya, 63100, Selangor
Darul Ehsan, Malaysia.}

\noindent {\scriptsize \hspace{1cm} $^2$Faculty of Information
Technology, Multimedia University, Jalan Multimedia,}

\noindent{\scriptsize \hspace{1.1cm} Cyberjaya, 63100, Selangor
Darul Ehsan, Malaysia.}
\begin{abstract}
Let $Z_n(s; a_1,\ldots, a_n)$ be the Epstein zeta function defined
as the meromorphic continuation of the function
\begin{align*}
 \sum_{k\in\Z^n\setminus\{0\}}\left(\sum_{i=1}^n [a_i
k_i]^2\right)^{-s},\hspace{1cm}\text{Re}\; s>\frac{n}{2}
\end{align*}to the complex plane. We show that for
fixed $s\neq n/2$, the function $Z_n(s; a_1,\ldots, a_n)$, as a
function of $(a_1,\ldots, a_n)\in (\R^+)^n$ with fixed
$\prod_{i=1}^n a_i$, has a unique minimum at the point
$a_1=\ldots=a_n$. When $\sum_{i=1}^n c_i$ is fixed, the function
$$(c_1,\ldots, c_n)\mapsto Z_n\left(s; e^{c_1},\ldots,
e^{c_n}\right)$$ can be shown to be a convex function of any $(n-1)$
of the variables $\{c_1,\ldots,c_n\}$. These results are then
applied to the study of the sign of  $Z_n(s; a_1,\ldots, a_n)$ when
$s$ is in the critical range $(0, n/2)$. It is shown that when
$1\leq n\leq 9$, $Z_n(s; a_1,\ldots, a_n)$ as a function of
$(a_1,\ldots, a_n)\in (\R^+)^n$,
 can be both positive and negative for
every $s\in (0,n/2)$. When $n\geq 10$, there are some open subsets
$I_{n,+}$ of $s\in(0,n/2)$, where $Z_{n}(s; a_1,\ldots, a_n)$ is
positive for all $(a_1,\ldots, a_n)\in(\R^+)^n$. By regarding
$Z_n(s; a_1,\ldots, a_n)$ as a function of $s$, we find that when
$n\geq 10$, the generalized Riemann hypothesis is   false for all
$(a_1,\ldots,a_n)$.
\end{abstract}

\section{Introduction} In \cite{Ep1, Ep2}, Epstein introduced the
following two-dimensional zeta function
\begin{align*}
Z_2(A;s)=\sum_{(j,k)\in
\Z^2\setminus\{(0,0)\}}\left[A(j,k)\right]^{-s},\hspace{1cm}\text{Re}\;
s> 1,
\end{align*}where $$A(j,k)=aj^2+2bjk+ck^2=\begin{pmatrix} j & k\end{pmatrix}\begin{pmatrix} a & b\\b &c \end{pmatrix}
\begin{pmatrix} j\\k\end{pmatrix}$$ is a positive definite real quadratic form associated to the
positive definite symmetric matrix
$$A=\begin{pmatrix} a & b\\b &c \end{pmatrix}.$$Later on,  some generalizations  of
this zeta function to higher dimension were considered \cite{e4, e1,
e5, e2, e6}. One of the generalizations is given by
\begin{align}\label{eq109_3}
Z_n(A;s)=\sum_{k\in
\Z^n\setminus\{0\}}\left[k^tAk\right]^{-s},\hspace{1cm}\text{Re}\;
s> \frac{n}{2},
\end{align}where now $A$ is an $n\times n$ positive definite
symmetric matrix and
$$k^tAk=\sum_{i=1}^n\sum_{j=1}^nA_{ij}k_i k_j$$ is
the associated quadratic form. It was proved that the Epstein zeta
function $Z_{n}(A,s)$ has a meromorphic continuation to the whole
complex plane with a single pole at $s=n/2$, and it satisfies a
functional equation. When $n=1$ and $A=(1)$, $Z_{1}(A;s)$ is nothing
but equal to $2\zeta_R(2s)$, where $\zeta_R(s)$ is  the Riemann zeta
function. The generalized Riemann hypothesis associated to the
Epstein zeta function says that aside from the trivial zeros at
$s=-j, j\in \mathbb{N}$, all the other zeros of $Z_{n}(A;s)$ are
located at the critical line $\text{Re}\;s=\frac{n}{4}$. As far back
as 1947, it has been known \cite{e3} that the Riemann hypothesis is
in general not true for the Epstein zeta function, even for $n=2$.

There exists another interpretation of Epstein zeta function with
more geometric flavour. Let $v_1, \ldots, v_n$ be $n$ linearly
independent vectors in $\R^n$ and let $L$ be the  non-degenerate
lattice generated by $\{v_1, \ldots, v_n\}$, i.e. $$L=\left\{
\sum_{i=1}^n k_i v_i\,:\, (k_1,\ldots, k_n)\in \Z^n\right\}.$$ The
quotient of $\R^n$ by $L$, i.e. $\R^n/L$, is an $n$-dimensional
torus. The function
\begin{align*}
\zeta_n(L;s)=\sum_{v\in L\setminus\{\mathbf{0}\}}\langle v,
v\rangle^{-s},\hspace{1cm}\text{where}\;\;\langle v,w\rangle :=
\sum_{i=1}^nv_iw_i,
\end{align*}is called the zeta function associated to the lattice
$L$.  It is actually the same as the Epstein zeta function
$Z_n(A;s)$ if  $A=T_L^t T_L$, and $T_L$ being the $n\times n$ matrix
whose columns are the generators $v_1, \ldots, v_n$ of the lattice
$L$, i.e.,
$$T_L=\begin{pmatrix} v_1 & \ldots & v_n\end{pmatrix}.$$Given a lattice $L$, its dual lattice $L^*$ is defined as the set
\begin{align*}
L^*=\left\{ w\in\R^n\,:\, \langle w, v\rangle =\sum_{i=1}^n w_i v_i
\in \Z \;\text{for all}\; v\in L\right\}.
\end{align*}$L^*$ is a lattice generated by a dual set of
vectors $\{w_1,\ldots, w_n\}$ with $\langle w_i, v_j\rangle
=\delta_{ij}$. The corresponding matrix $T_{L^*}$ is related to
$T_L$ by $T_{L^*}=(T_L^{-1})^t$. The zeta function $\zeta_n(L^*;s)$
of the dual lattice $L^*$ is related to a  spectral zeta function
associated to the compact torus $\R^n/L$. Given a compact manifold
$M$ with Laplace operator $\Delta_M$, the zeta function of
$\Delta_M$ is defined as
\begin{align}\label{eq109_2}\zeta_{\Delta_M}(s)=\sum_{n=1}^{\infty} \lambda_n^{-s},\end{align} where
$0< \lambda_1\leq \lambda_2\leq \ldots$ are the nonzero eigenvalues
of $\Delta_M$. For the compact torus $\R^n/L$,
 the Laplace operator
$\Delta=\frac{\pa^2}{\pa x_1^2}+\ldots+\frac{\pa^2}{\pa x_n^2}$ on
$\R^n$ descends to the Laplace operator on $\R^n/L$. One can easily
check that the associated zeta function $\zeta_{\Delta_{\R^n/L}}$ is
up to a constant, the zeta function of the dual lattice $L^*$, i.e.
\begin{align*}
\zeta_{\Delta_{\R^n/L}}=c\zeta_{n}(L^*;
s)=cZ_n\left(T_L^{-1}(T_{L}^{-1})^t;s\right).
\end{align*}Under this perspective, the Epstein zeta function enters the realm of
theoretical physics. In quantum theory, very often one needs to
compute the functional determinant of a positive definite
(pseudo)-differential operator $P$ (e.g., the Laplace operator),
which is defined as
\begin{align}\label{eq109_1}
\mathfrak{D}_P=\prod_{i=1}^{\infty}\lambda_i,
\end{align}where $\lambda_i, 1\leq i<\infty$ are the nonzero eigenvalues of
$P$. This functional determinant usually appears as one-loop
partition function of a field theory, and is closely related to
Casimir effect
  as well as the one-loop effective potential of the theory.
In general, the infinite product in \eqref{eq109_1} is divergent and
regularization is required to obtain a finite quantity. One of the
regularization techniques, known as zeta regularization method (see
e.g. \cite{E1, ET,K}), uses the zeta function associated to $P$,
$\zeta_P(s)$, which is defined as in \eqref{eq109_2}. By proving
that $\zeta_P(s)$ has an analytic continuation to a neighborhood of
$s=0$,  the regularized functional determinant $\mathfrak{D}_P$ is
then defined as
\begin{align*}
\mathfrak{D}_P=\exp\left(-\zeta_P'(0)\right).
\end{align*}When the underlying spacetime of the field theory is a
toroidal manifold $T^n\times \R^N$, where $T^n$ is an
$n$-dimensional torus with compactification lengths $L_1, \ldots,
L_n$ and $P$ is the Laplace operator, the associated zeta function
$\zeta_P(s)$ is, up to a constant, equal to
\begin{align*}
\sum_{k\in\Z^n\setminus\{0\}}\left(\sum_{i=1}^n
\left[\frac{k_i}{L_i}\right]^2\right)^{-s+\frac{N}{2}},
\end{align*}which is the Epstein zeta function $Z_n\left(A;
s-\frac{N}{2}\right)$, with $A=\text{diag}\,\{ 1/L_1^2, \ldots,
1/L_n^2\}$ a diagonal matrix. We use the notation
\begin{align}\label{eq109_5} Z_{n}(s; a_1, \ldots,
a_n):=\sum_{k\in\Z^n\setminus\{0\}}\left(\sum_{i=1}^n
[a_ik_i]^2\right)^{-s}
\end{align} to denote this special subclass of the Epstein
zeta function corresponding to $$A=\begin{pmatrix} a_1^2 & 0 &
\ldots & 0\\0 & a_2^2 & \ldots &0\\
\vdots &\vdots & & \vdots\\
0 & 0 &\ldots & a_n^2\end{pmatrix}$$ in \eqref{eq109_3}. Here it is
assumed that $a_1, \ldots, a_n$ are positive real numbers. Another
situation where $Z_n(s;a_1,\ldots, a_n)$  appears is when the
underlying spacetime is a rectangular cavity    $\Omega:=[0,
L_1]\times\ldots\times[0, L_n]\times \R^{N}$. The function $Z_n(s;
a_1,\ldots, a_n)$ satisfies a functional equation (or called
reflection formula), which can be written in the following symmetric
form
\begin{align}\label{eq109_8}
\sqrt{\prod_{i=1}^n a_i}\pi^{-s}\Gamma(s)Z_{n}\left(s; a_1,\ldots,
a_n\right)=\frac{\pi^{s-\frac{n}{2}}}{\sqrt{\prod_{i=1}^n
a_i}}\Gamma\left(\frac{n}{2}-s\right)Z_n\left(\frac{n}{2}-s;
\frac{1}{a_1},\ldots,\frac{1}{a_n}\right).
\end{align}
In physics literature, the Epstein zeta function $Z_{n}(s; A)$
usually appears together with gamma function in the combination
\begin{align}\label{eq109_9} \Gamma(s)Z_{n}(s; A).
\end{align} For example, in \cite{AW}, it is found that the Casimir energy of a massless
scalar field in the rectangular cavity $\Omega=[0,
L_1]\times\ldots\times[0, L_n]\times \R^{N}$ under periodic boundary
conditions is, up to a positive constant, equal to
\begin{align}\label{eq109_7}
-\Gamma\left(-\frac{N}{2}\right)Z_{n}\left(-\frac{N}{2};
\frac{1}{L_1},\ldots, \frac{1}{L_n}\right).
\end{align}
For massless scalar fields under Dirichlet or Neumann boundary
conditions, or electromagnetic fields in cavities with perfectly
conducting walls or walls with infinite permeability, the
corresponding Casimir energy is   expressible as  linear
combinations of expressions of the form \eqref{eq109_7}. Another
example is the the massless scalar field theory with
$\lambda\varphi^4$ interaction on the toroidal manifold $T^n\times
\R^{4-n}$, where $n=1,3,4$. It was found that \cite{c41, c42, c43,
c44, EK1} the topologically generated mass of this theory, to the
one-loop order, is, up to a positive constant, given by :
\begin{align*}
\Gamma\left(\frac{n}{2}-1\right)Z_{n}\left(\frac{n}{2}-1;
\frac{1}{L_1},\ldots,\frac{1}{L_n}\right).
\end{align*} In view of the functional equation \eqref{eq109_8} and the importance of the combination \eqref{eq109_9},
 we define the Xi-function as
\begin{align}\label{eq109_10}
\Xi_n(s; a_1,\ldots, a_n)=\sqrt{\prod_{i=1}^n a_i} \pi^{-s}\Gamma(s)
Z_n(s; a_1,\ldots, a_n).
\end{align}

 In the study of Casimir
effect, the attractive or repulsive nature of the Casimir force, and
the geometric configuration of the spacetime that will minimize or
maximize the Casimir effect are important issues. In the study of
interacting scalar field theory, the sign of the topologically
generated mass   determines whether symmetry breaking mechanism
occurs. Therefore it is important to study the sign of the
Xi-function \eqref{eq109_10} and determine its extrema as a function
of $(a_1,\ldots, a_n)$. This paper is devoted to the study of these
issues. In fact, the main motivation comes from our recent study
  of $\lambda\varphi^4$ interacting fractional Klein--Gordon
field theory on a toroidal manifold $T^n\times \R^N$ \cite{e10}. In
that study, we find that to the one-loop order, the topological
generated mass of a $\alpha$-fractional massless Klein--Gordon field
is given by
\begin{align*}
m_T^2=&\frac{\lambda}{\Gamma(\alpha)}\frac{1}{2^{2\alpha+1}\pi^{2\alpha-\frac{N}{2}}\left[\prod_{i=1}^n
L_i\right]}\Gamma\left(\alpha-\frac{N}{2}\right)Z_{n}\left(\alpha-\frac{N}{2};
\frac{1}{L_1},\ldots,\frac{1}{L_n}\right)\\
=&\frac{\lambda}{\Gamma(\alpha)}\frac{1}{2^{2\alpha+1}\pi^{\frac{n+N}{2}}}\Gamma\left(\frac{n+N}{2}-\alpha\right)
Z_{n}\left( \frac{n+N}{2}-\alpha; L_1, \ldots, L_n\right)\nonumber
\end{align*}when $\alpha\neq N/2$ or $(n+N)/2$. We need to study the
sign of $m_T^2$ as a function of $\alpha, n, N$ and $L_1,\ldots,
L_n$ to determine whether symmetry breaking appears. To solve this
problem, it is necessary to determine the minimum of $\Xi_n(s;
L_1,\ldots, L_n)$ for fixed $\prod_{i=1}^n L_i$ and $s$.

The study of the (local) minima of the general Epstein zeta function
$Z_{n}(A;s)$ \eqref{eq109_3} as a function of $A\in
\{\text{symmetric positive definite} \; n\times n \; \text{real
matrices}\}=\mathcal{P}_n$ with fixed determinant has a long history
[15--38]. Such investigation is important from a number of points of
view. One can read the introduction of the   recent paper \cite{f11}
for an overview of this problem. The local minima of $Z_n(A;s), A\in
\mathcal{P}_n, \det A=1$, for $s$ in some domain of $\R$ has only
been determined for $n=2, 3, 4, 5, 6, 7, 8, 24$; and the problem is
far from being solved. In this paper, we restrict ourselves to a
milder problem of determining the minimum of $Z_{n}(A;s)$ in the
subspace of positive definite symmetric forms consists of diagonal
forms. We give an elementary proof to show that for any $n$ and $s$,
if $\prod_{i=1}^n a_i$ is fixed, the minimum of $\Xi_n(s;
a_1,\ldots, a_n)$ (and hence of $Z_{n}(s; a_1, \ldots, a_n)$)
appears at the point $a_1=\ldots=a_n$. This result implies that if
$\Xi_n (s)=\Xi_n(s; 1,\ldots, 1)\geq 0$, then $\Xi_n(s; a_1, \ldots,
a_n)\geq 0$ for all $(a_1, \ldots, a_n)$. Therefore,   the problem
whether $\Xi_n(s; a_1, \ldots, a_n)$ will be negative for some
$(a_1,\ldots, a_n)$ is reduced to  to the problem of whether
$\Xi_n(s)$ is negative. In section 3, we give a detail study of the
sign of the function $\Xi_n(s)$. We show that for $1\leq n\leq 9$,
$\Xi_n(s)$ is negative for all $s\in (0, n/2)$; whereas for $n\geq
10$, there is a nonempty subset of $(0, n/2)$ where $\Xi_n(s)$ is
positive. In section 4, we prove that for fixed $s$, when
$\sum_{i=1}^n c_i$ is fixed, the function
\begin{align}\label{eq1017_8}(c_1,\ldots, c_n)\mapsto Z_n\left(s; e^{c_1},\ldots,
e^{c_n}\right)\end{align} is a convex function of any $(n-1)$ of the
variables $\{c_1,\ldots,c_n\}$. In fact, we prove a stronger result.
We show that for a function of the form
$$\mathcal{F}_n(x_1,\ldots, x_n)=\prod_{i=1}^n f(x_i),$$ where
$f:\R\rightarrow \R^+$ is a positive twice continuously
differentiable function, $\mathcal{F}_n$ is convex   if the function
$\log f(x)$ is convex. Finally, the result about the convexity of
the function \eqref{eq1017_8} can be used to obtain conclusion on
the connectivity and convexity of some regions of $(c_1,\ldots,
c_n)$ where $Z_n(s; e^{c_1},\ldots, e^{c_n})$ is negative.

\section{The minimum of Epstein zeta function}
In this section, we show that for $s\notin \{ n/2\}\cup\{ 0, -1, -2,
\ldots\}$, the function $Z_n(s; a_1,\ldots, a_n), (a_1, \ldots,
a_n)\in (\R^+)^n$ has a minimum at $a_1=\ldots=a_n$ when
$\prod_{i=1}^n a_i$ is fixed.

There are various ways to obtain the meromorphic continuation of the
Epstein zeta function $Z_n (s; a_1, \ldots, a_n)$ \eqref{eq109_5}.
Two of them will be useful to us. The first one has the form (see
e.g. \cite{f11}):
\begin{align}\label{eq1011_1}
&V\pi^{-s}\Gamma(s)Z_n(s; a_1,\ldots, a_n)=\Xi_n(s; a_1,\ldots,
a_n)\\\nonumber=&-\frac{V}{s}
-\frac{V^{-1}}{\left(\frac{n}{2}-s\right)} + V\int_1^{\infty}
t^{s-1}\sum_{k\in \Z^n\setminus\{0\}}\exp\left(-\pi t\sum_{i=1}^n
[a_i k_i]^2\right)dt\\&+V^{-1}\int_1^{\infty}
t^{\frac{n}{2}-s-1}\sum_{k\in \Z^n\setminus\{0\}}\exp\left(-\pi
t\sum_{i=1}^n \left[\frac{k_i}{a_i}\right]^2\right)dt,\nonumber
\end{align}
where $$V=\sqrt{\prod_{i=1}^n a_i}.$$This formula shows clearly the
Xi-function $\Xi_n(s;a_1,\ldots, a_n)$ only has simple poles at
$s=0$ and $s=\frac{n}{2}$. Moreover, since the gamma function
$\Gamma(s)$ has simple poles at $s=0,-1, -2, -3,\ldots$, the Epstein
zeta function $Z_n(s; a_1, \ldots, a_n)$ is identically zero at the
negative integer points $-1, -2, -3,\ldots$. The second formula is
one form of the Chowla-Selberg formula \cite{d19,
d20}:\begin{align}\label{eq830_9}
&\pi^{-s}\Gamma(s)Z_{n}(s; a_1,\ldots, a_n)\\
=\nonumber &2a_1^{-2s}\pi^{-s}\Gamma(s)\zeta_R(2s)
+2\sum_{j=1}^{n-1}\frac{\pi^{-s+\frac{j}{2}}\Gamma\left(s-\frac{j}{2}\right)}
{a_{j+1}^{2s-j}\prod_{l=1}^{j}a_l} \zeta_R(2s-j) +4 \sum_{j=1}^{n-1}
\frac{1}{\prod_{l=1}^{j}a_l}\times\\\nonumber&\sum_{\mathbf{k}\in
\Z^j\setminus\{\mathbf{0}\}}\sum_{p=1}^{\infty}\frac{1}{(pa_{j+1})^{s-\frac{j}{2}}}\left(\sum_{l=1}^j
\left[\frac{k_l}{a_l}\right]^2\right)^{\frac{s}{2}-\frac{j}{4}}K_{s-\frac{j}{2}}\left(
2\pi p a_{j+1}\sqrt{\sum_{l=1}^j
\left[\frac{k_l}{a_l}\right]^2}\right),
\end{align}where $\zeta_R(s)=\sum_{k=1}^{\infty} k^{-s}$ is the
Riemann zeta function.
 From \eqref{eq830_9}, it is easy to
verify that for any $\lambda\in\R^+$,
\begin{align*}
Z_n(s; \lambda a_1,\ldots, \lambda a_n  )=\lambda^{-2s} Z_n(s; a_1,
\ldots, a_n).
\end{align*}Therefore, it only makes sense to look for the minimum of
$Z_n(s;   a_1,\ldots,  a_n  )$ as a function of $(a_1,\ldots, a_n)$
when $V$ is fixed. Without loss of generality, it suffices to
consider $V=1$. We are going to make use of \eqref{eq1011_1}  to
find the minimum of the Xi-function $\Xi_n(s; a_1, \ldots, a_n)$ as
a function of $(a_1,\ldots, a_n)$  for fixed $s\neq 0, n/2$ and when
$V=1$.  Since multiplying a constant does not affect the minimum of
a function, this will give the minimum of the Epstein zeta function
$Z_n(s; a_1,\ldots, a_n )$ as a function of $(a_1,\ldots, a_n)$ for
$s\notin \{n/2\}\cup\{0,-1,-2, \ldots\}$ and when $V=1$. By
\eqref{eq1011_1}, for fixed $s$ and $V=1$, the minimum of $\Xi_n(s;
a_1, \ldots, a_n)$ is the same as the minimum of
\begin{align}\label{eq1011_2}
\Lambda_n(s;a_1,\ldots, a_n)=&\int_1^{\infty} t^{s-1}\sum_{k\in
\Z^n\setminus\{0\}}\exp\left(-\pi t\sum_{i=1}^n [a_i
k_i]^2\right)dt\\&+\int_1^{\infty} t^{\frac{n}{2}-s-1}\sum_{k\in
\Z^n\setminus\{0\}}\exp\left(-\pi t\sum_{i=1}^n
\left[\frac{k_i}{a_i}\right]^2\right)dt.\nonumber
\end{align}Define the theta function $\vartheta(t)$ by
\begin{align}\label{eq1023_1}
\vartheta(t)=\vartheta_3(0, e^{-\pi t})=\sum_{k=-\infty}^{\infty}
e^{-\pi t k^2}
\end{align}Here $\vartheta_3(z,q)$ is a Jacobi theta function
\cite{g1}. The theta function $\vartheta(t)$ satisfies a reflection
formula: \begin{align}\label{eq1011_6}
\vartheta(t)=\frac{1}{\sqrt{t}}\vartheta\left(\frac{1}{t}\right).
\end{align}The theta function \eqref{eq1023_1} can be used to reexpress \eqref{eq1011_2}  as
\begin{align}\label{eq1011_2_1}
\Lambda_n(s;a_1,\ldots, a_n)=&\int_1^{\infty}
t^{s-1}\left(\prod_{i=1}^n
\vartheta\left(ta_i^2\right)-1\right)dt\\&+\int_1^{\infty}
t^{\frac{n}{2}-s-1}\left(\prod_{i=1}^n
\vartheta\left(\frac{t}{a_i^2}\right)-1\right)dt.\nonumber
\end{align}
To determine the minimum of $\Lambda_n(s; a_1,\ldots, a_n)$, we need
the following:
\begin{proposition}\label{p1}
The function $\log\vartheta(e^u)$, $u\in\R$ is strictly convex.
Namely, for any $m$ distinct points  $u_1, \ldots, u_m\in\R$ and $m$
constants $\lambda_1,\ldots, \lambda_m\in (0,1)$ such that
$\sum_{i=1}^m \lambda_i=1$, we have
\begin{align*}
\log\vartheta \left(\exp\left[\sum_{i=1}^m
\lambda_iu_i\right]\right)< \sum_{i=1}^m\lambda_i
\log\vartheta(e^{u_i}).
\end{align*}
\end{proposition}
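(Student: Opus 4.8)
The plan is to reduce the stated Jensen-type inequality to the pointwise bound $\dfrac{d^2}{du^2}\log\vartheta(e^u)>0$ for all $u\in\R$. Since $\vartheta(e^u)=\sum_{k\in\Z}e^{-\pi k^2e^u}$ and all its $u$-derivatives converge locally uniformly, $\log\vartheta(e^u)$ is smooth, so a positive second derivative gives strict convexity, and strict convexity gives the strict inequality for distinct nodes $u_1,\dots,u_m$ and weights $\lambda_i\in(0,1)$. To compute the second derivative I would write $F(u)=\vartheta(e^u)=\sum_k e^{-w_k}$ with $w_k=\pi k^2e^u$, so that $\frac{d}{du}w_k=w_k$; term-by-term differentiation gives $F'=-\sum_k w_ke^{-w_k}$ and $F''=\sum_k(w_k^2-w_k)e^{-w_k}$, and hence, writing $p_k=e^{-w_k}/F$ for the associated probability weights on $\Z$,
\begin{align*}
\frac{d^2}{du^2}\log\vartheta(e^u)=\frac{F''F-(F')^2}{F^2}=\mathrm{Var}_p(w)-\mathbb{E}_p(w).
\end{align*}
Since $w_k=ck^2$ with $c=\pi e^u$, a routine symmetrization $k\leftrightarrow l$ of the resulting double sum over $\Z^2$ turns this into
\begin{align*}
\frac{d^2}{du^2}\log\vartheta(e^u)=\frac{c}{2\,\vartheta(e^u)^2}\sum_{(k,l)\in\Z^2}e^{-c(k^2+l^2)}\,[\,c(k^2-l^2)^2-(k^2+l^2)\,],
\end{align*}
so it suffices to prove that $\Sigma(c):=\sum_{(k,l)\in\Z^2}e^{-c(k^2+l^2)}[\,c(k^2-l^2)^2-(k^2+l^2)\,]$ is positive for every $c>0$.

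The second step is a symmetry reduction. From the reflection formula \eqref{eq1011_6} one gets $\log\vartheta(e^u)=-\tfrac{u}{2}+\log\vartheta(e^{-u})$, so $\frac{d^2}{du^2}\log\vartheta(e^u)$ is an \emph{even} function of $u$; equivalently, positivity of $\Sigma(c)$ is invariant under $c\mapsto\pi^2/c$. Hence it is enough to prove $\Sigma(c)>0$ for $c\ge\pi$ (i.e.\ $u\ge 0$). For such $c$ I would isolate the four terms with $(k,l)\in\{(\pm1,0),(0,\pm1)\}$, which together contribute $4e^{-c}(c-1)>0$, and bound every remaining term crudely via $c(k^2-l^2)^2-(k^2+l^2)\ge-(k^2+l^2)$. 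This gives
\begin{align*}
\Sigma(c)\ \ge\ 4e^{-c}(c-1)-\sum_{(k,l)\in\Z^2}(k^2+l^2)e^{-c(k^2+l^2)}=4e^{-c}(c-1)-4\,\vartheta(e^u)\!\!\sum_{m\ge1}m^2e^{-cm^2}.
\end{align*}
Using that $\vartheta$ is decreasing (so $\vartheta(e^u)\le\vartheta(1)$ for $u\ge0$) and the elementary bound $\sum_{m\ge1}m^2e^{-cm^2}\le Ce^{-c}$ for $c\ge\pi$ with an absolute constant $C$, the right-hand side is $e^{-c}\bigl(4(c-1)-C'\bigr)$, which is strictly positive for all $c\ge\pi$ (indeed for all $c$ above a small absolute constant). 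Combined with evenness in $u$, this proves $\frac{d^2}{du^2}\log\vartheta(e^u)>0$ everywhere, and the proposition follows.

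The step I expect to carry the real difficulty is controlling $\Sigma(c)$ for small $c$: in the continuum (Gaussian) approximation $\vartheta(e^u)$ behaves like $t^{-1/2}$ and the second derivative vanishes identically, so as $c\to0^+$ the positive and negative contributions to $\Sigma(c)$ very nearly cancel — the surviving positivity is only of order $e^{-\pi^2/c}$ — and any direct term-by-term estimate in that regime is awkward. This is precisely why the reflection formula is the crucial device: it lets one confine the actual estimate to $c\ge\pi$, where the $k=\pm1$ contribution dominates the entire remainder with room to spare, and everything else is bookkeeping.
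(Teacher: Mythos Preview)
Your proposal is correct and follows the paper's route quite closely: both reduce to showing $\dfrac{d^2}{du^2}\log\vartheta(e^u)>0$, invoke the reflection formula \eqref{eq1011_6} to restrict to $u\ge 0$ (equivalently $v=e^u\ge 1$, or your $c=\pi e^u\ge\pi$), and then express the second derivative as (a positive multiple of) the double sum $\sum_{(k,l)\in\Z^2}\bigl[c(k^2-l^2)^2-(k^2+l^2)\bigr]e^{-c(k^2+l^2)}$, which is exactly the paper's $h(v)$ up to normalization.

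The only genuine difference is in the final positivity estimate for this double sum. The paper argues termwise: it shows that the bracket $C(j,k)=\pi(j^2-k^2)^2-v^{-1}(j^2+k^2)$ is positive whenever $|j|\ne|k|$, and handles the diagonal $|j|=|k|$ (where $C(k,k)=-2k^2/v<0$) by pairing each such term with the adjacent off-diagonal term via the inequality $C(k-1,k)+\tfrac12 C(k,k)>0$. Your estimate is cruder but shorter: you isolate the four dominant terms at $(\pm1,0),(0,\pm1)$, contributing $4e^{-c}(c-1)$, and bound every other bracket from below by $-(k^2+l^2)$, leading to a tail $4\vartheta(e^u)\sum_{m\ge1}m^2e^{-cm^2}$ that is $\le 4(1+\varepsilon)\vartheta(1)e^{-c}$ for $c\ge\pi$; since $4(\pi-1)\approx 8.57$ comfortably exceeds $4\vartheta(1)\approx 4.35$, the bound closes. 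Your diagnosis that the reflection formula is the crucial device --- absorbing the delicate small-$c$ regime where the positivity is only of size $e^{-\pi^2/c}$ --- is exactly the paper's point as well. The paper's pairing argument gives a little more structural information (positivity of each $|j-k|$-shell), while your approach is more economical; both are valid.
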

This proposition can be used to prove the main result of this
section.
\begin{theorem}\label{th1}
For fixed $s\in\R\setminus\{0, n/2\}$, the function $\Xi_n(s;
a_1,\ldots, a_n)$, where $(a_1,\ldots, a_n)\in (\R^+)^n$ with
$\prod_{i=1}^n a_i=1$, has a unique minimum at $a_1=\ldots=a_n=1$.
Therefore for fixed $s\in\R\setminus\left(\{n/2\}\cup\{ 0, -1, -2,
-3, \ldots\}\right)$, the same statement holds for the function
$Z_{n}\left(s; a_1,\ldots, a_n\right)$.
\end{theorem}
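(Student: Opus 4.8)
The plan is to pass to the additive coordinates $a_i=e^{b_i}$, which turns the constraint $\prod_i a_i=1$ into the linear condition $b=(b_1,\dots,b_n)\in H$ for the hyperplane $H=\{b\in\R^n:\sum_i b_i=0\}$, and then to show that the function $\Lambda_n$ of \eqref{eq1011_2_1}, regarded as a function of $b\in H$, is strictly convex and coercive; its invariance under permutations of the coordinates will then pin the unique minimizer to $b=0$. By \eqref{eq1011_1} with $V=1$, the map $a\mapsto\Xi_n(s;a)$ on $\{\prod_i a_i=1\}$ differs from $a\mapsto\Lambda_n(s;a)$ only by an additive constant independent of $a$, so it suffices to treat $\Lambda_n$; the statement for $Z_n$ then follows from \eqref{eq109_10}, since for $V=1$ the functions $\Xi_n(s;\cdot)$ and $Z_n(s;\cdot)$ differ by the nonzero multiplicative constant $\pi^{-s}\Gamma(s)$.

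The heart of the matter is a convexity statement that holds pointwise in $t$. Fix $t>1$ and write $\vartheta(ta_i^2)=\vartheta\bigl(e^{\log t+2b_i}\bigr)$, so that $\prod_{i=1}^n\vartheta(ta_i^2)=\exp\bigl(\sum_{i=1}^n\log\vartheta(e^{\log t+2b_i})\bigr)$. For each $i$ the map $b_i\mapsto\log\vartheta(e^{\log t+2b_i})$ is the composition of the strictly convex function $u\mapsto\log\vartheta(e^u)$ of Proposition \ref{p1} with a non-constant affine map, hence strictly convex; a sum of functions each strictly convex in a single distinct coordinate is strictly convex on $\R^n$, and composing with the convex increasing function $\exp$ preserves strict convexity. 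Thus $b\mapsto\prod_i\vartheta(ta_i^2)-1$ is strictly convex on $\R^n$, and likewise $b\mapsto\prod_i\vartheta(t/a_i^2)-1$. Integrating against the measures $t^{s-1}\,dt$ and $t^{n/2-s-1}\,dt$ on $(1,\infty)$, which are positive for every real $s$, preserves strict convexity, so $b\mapsto\Lambda_n(s;a)$ is strictly convex on $\R^n$, and a fortiori on $H$.

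For coercivity, recall that $\vartheta\ge1$ and $\vartheta(\tau)\to\infty$ as $\tau\to0^+$; hence if $\min_i a_i\to0$ then, for each fixed $t$, the integrand $\prod_i\vartheta(ta_i^2)-1$ tends to $+\infty$, and by Fatou the first integral in \eqref{eq1011_2_1} tends to $+\infty$. On $H$, $|b|\to\infty$ forces $\min_i b_i\to-\infty$, that is $\min_i a_i\to0$, so $\Lambda_n$ is coercive on the closed convex set $H$. A continuous, strictly convex, coercive function on $H$ attains a unique global minimum $b^{\ast}$. Since $\Lambda_n$ is invariant under permutations of $(b_1,\dots,b_n)$ and $H$ is permutation-invariant, $b^{\ast}$ is fixed by every permutation, so $b_1^{\ast}=\cdots=b_n^{\ast}$; combined with $\sum_i b_i^{\ast}=0$ this yields $b^{\ast}=0$, i.e. $a_1=\cdots=a_n=1$.

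The one genuinely substantive ingredient is Proposition \ref{p1}; granted the strict convexity of $u\mapsto\log\vartheta(e^u)$, the rest is routine convex analysis together with the permutation symmetry. The only care needed is in checking that $\Lambda_n$ is finite and continuous and that the passage to the limit in the Fatou step is legitimate — everything here is controlled by the exponential decay of $\vartheta(t)-1$ as $t\to\infty$, locally uniformly in $b$ — or, if one prefers to locate $b^{\ast}$ via a critical-point computation rather than coercivity, in differentiating under the integral sign, which is justified by the same decay. I do not anticipate any serious obstacle beyond Proposition \ref{p1} itself.
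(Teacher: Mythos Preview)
Your argument is correct, and it rests on the same substantive ingredient as the paper's proof, namely Proposition~\ref{p1}. However, the route is genuinely different. The paper applies the strict convexity of $u\mapsto\log\vartheta(e^u)$ \emph{directly} as a Jensen inequality with the $n$ points $u_i=\log t+2\log a_i$ and equal weights $1/n$: the constraint $\sum_i\log a_i=0$ makes the barycenter exactly $\log t$, so one reads off $\prod_i\vartheta(ta_i^2)>\vartheta(t)^n$ immediately for any $(a_1,\dots,a_n)\neq(1,\dots,1)$ with $\prod_i a_i=1$, and likewise for the second integrand. This compares every point to $(1,\dots,1)$ in one stroke, with no coercivity and no symmetry argument.

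Your approach instead establishes global strict convexity of $b\mapsto\Lambda_n(s;e^{b_1},\dots,e^{b_n})$, then invokes coercivity and permutation invariance to locate the minimizer. This is slightly longer for the present purpose, but it buys more: you have in effect proved the convexity statement of Theorem~\ref{th2} along the way, via the one-line observation that $\exp$ composed with a sum of separately strictly convex functions is strictly convex. The paper proves Theorem~\ref{th2} separately in Section~4 by a Hessian computation and Sylvester's criterion; your route to that result is arguably cleaner.
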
First we show how we prove this theorem from
Proposition \ref{p1}.
\begin{proof}
By the strict convexity of $\log \vartheta(e^u)$ asserted in
Proposition \ref{p1}, we find that if $(a_1,\ldots, a_n)\neq
(1,\ldots,1)$, then
\begin{align*}
&\log\prod_{i=1}^n \vartheta\left(a_i^2 t\right)= \sum_{i=1}^n \log
\vartheta\left(\exp\left[\log t +\log a_i^2\right]\right) \\>& n
\log\vartheta\left(\exp\left\{\frac{1}{n} \sum_{i=1}^n\left[\log
t+\log a_i^2\right]\right\}\right)= \log \vartheta(t)^n.
\end{align*}Therefore,
\begin{align}\label{eq1011_3}
\prod_{i=1}^n \vartheta\left(a_i^2 t\right)> \vartheta(t)^n.
\end{align}
Similarly,\begin{align}\label{eq1011_4} \prod_{i=1}^n
\vartheta\left(\frac{ t}{a_i^2}\right)> \vartheta(t)^n.
\end{align}Since $t^{\beta-1}>0$ for all $t>0$ and $\beta\in\R$ and
\begin{align*}
\prod_{i=1}^n \vartheta\left(b_i^2 t\right)-1> 0
\hspace{0.5cm}\text{for all}\; (b_1,\ldots, b_n)\in (\R^+)^n,
\end{align*}we conclude from \eqref{eq1011_2_1}, \eqref{eq1011_3} and \eqref{eq1011_4}
that for all $(a_1,\ldots, a_n)\in (\R^+)^n\setminus\{(1,\ldots,
1)\}$ with $\prod_{i=1}^n a_i=1$, the inequality
\begin{align*}
\Lambda_n(s; a_1, \ldots, a_n)> \Lambda_n(s; 1,\ldots, 1)
\end{align*}holds. The  assertion of the theorem follows.
\end{proof}
Now we return to the  proof of Proposition \ref{p1}.
\begin{proof}Let
\begin{align}\label{eq1011_5}
g(u)&=\log\vartheta(u), \hspace{2.3cm} u\in \R^+,
\\f(u)&=g(e^u)=\log\vartheta(e^u),\hspace{1cm}u\in \R.\nonumber
\end{align}
We need to show that $f$ is strictly convex. Since $f$ is an
infinitely differentiable function, we need to show that
$f^{\prime\prime}(u)>0$ for all $u\in\R$. First, note that the
reflection formula \eqref{eq1011_6} implies that
\begin{align}\label{eq1011_7}
f(-u)=\frac{u}{2}+f(u).
\end{align} Therefore
$$f^{\prime\prime}(-u)=f^{\prime\prime}(u).$$ As a result, we only need  to show that $f^{\prime\prime}(u)> 0$
for all $u\in\R^+$. By definition,
\begin{align*}
g'(u)=\frac{\vartheta'(u)}{\vartheta(u)},
\hspace{1cm}g^{\prime\prime}(u)=\frac{\vartheta^{\prime\prime}(u)\vartheta(u)-\vartheta'(u)^2}{\vartheta
(u)^2},
\end{align*}and\begin{align*}
f^{\prime\prime}(u) =& e^{2u}g^{\prime\prime}(u)+e^ug'(e^u)\\
=&e^{2u}\frac{
\vartheta^{\prime\prime}(e^u)\vartheta(e^u)-\vartheta'(e^u)^2+e^{-u}\vartheta(e^u)\vartheta'(e^u)}{\vartheta(e^u)^2}.
\end{align*}Hence, we have to prove that for all $v\geq 1$,
\begin{align*}h(v)=&\vartheta^{\prime\prime}(v)\vartheta(v)-\vartheta'(v)^2+v^{-1}\vartheta(v)\vartheta'(v)>0.
\end{align*}Again by definition,\begin{align}\label{eq1011_9}h(v)
=&\sum_{j\in\Z} \pi^2j^4 e^{-\pi v j^2}\sum_{k\in\Z}e^{-\pi v
k^2}-\sum_{j\in \Z}\pi j^2 e^{-\pi vj^2}\sum_{k\in\Z} \pi k^2e^{-\pi
vk^2} -v^{-1}\sum_{j\in\Z} e^{-\pi v j^2}\sum_{k\in\Z}\pi k^2e^{-\pi
vk^2}\\
=&\frac{\pi}{2}\sum_{(j,k)\in\Z^2}\left[\pi j^4+\pi k^4-2\pi
j^2k^2-\frac{1}{v}(j^2+k^2)\right] e^{-\pi v (j^2+k^2)}.\nonumber
\end{align}For fixed $v\geq 1$ and $(j,k)\in \Z^2$, define
\begin{align*}
C(j,k)  =\pi j^4+\pi k^4-2\pi
j^2k^2-\frac{1}{v}(j^2+k^2)=\pi(k^2-j^2)^2-\frac{1}{v}(j^2+k^2).
\end{align*} If $C(j,k)>0$ for all $(j,k)\in\Z^2$, it follows directly that $h(v)>0$. However, $C(k,k)=
-2v^{-1}k^2\leq 0$, which renders the verification of  $h(v)>0$
slightly more complicated. We need to compare the values of $C(j,k)$
for different pairs of $(j,k)$. Since
$C(j,k)=C(-j,k)=C(j,-k)=C(-j,-k)$ and $C(j,k)=C(k,j)$, we need only
to consider $C(j,k)$ as a function of $(j,k)\in \mathbb{N}_0$ with
$j\leq k$. Here $\mathbb{N}_0=\mathbb{N}\cup\{0\}$. We claim that
\begin{align*}
&(a) \; \text{If}\; j\neq k, C(j,k)>0,\\
&(b)\; \text{For any}\; k\geq 1, C(k-1,
k)+\frac{1}{2}C(k,k)>0.\hspace{6cm}
\end{align*} By rewriting $C(j,k)$ as
\begin{align*}
C(j,k)=\pi j^4-\left(2\pi k^2+\frac{1}{v}\right)j^2+\pi
k^4-\frac{k^2}{v},
\end{align*}  we  see that for fixed $k$, $C(j,k)$ is decreasing if
$j\in [0, k]$. Therefore, to prove $(a)$, it suffices to verify that
$C(k-1, k)>0$. Since $C(k,k)\leq 0$, this will follow if we prove
$(b)$. Now for $v\geq 1$, the function
\begin{align*}
q(k)=C(k-1,k)+\frac{1}{2}C(k,k)
=&\pi(4k^2-4k+1)-\frac{1}{v}(2k^2-2k+1)-\frac{k^2}{v}\\
\geq &\pi(4k^2-4k+1)-(3k^2-2k+1)\\
=&(4\pi -3)k^2 -2(2\pi-1)k +\pi-1
\end{align*}is  increasing for $k\geq 1$, and $q(1)= \pi-2>0$,  this
proves $(b)$. Returning to the proof of $h(v)>0$, we write the
summation over $(j,k)\in\Z^2$ in \eqref{eq1011_9} as
\begin{align*}
\frac{\pi}{2}\Biggl\{\sum_{\substack{(j,k)\in \Z^2\\|j-k|\geq
2}}+\sum_{\substack{(j,k)\in \Z^2\\|j-k|=
1}}+\sum_{\substack{(j,k)\in \Z^2\\j=k}}\Biggr\}C(j,k)e^{-\pi v
(j^2+k^2)}.
\end{align*}By $(a)$, the first sum is strictly positive. Now since $C(0,0)=0$ and
$C(\pm j,\pm k)=C(j,k)=C(k,j)$, the last two sums can be written as
\begin{align*}
2\pi\sum_{k=1}^{\infty}C(k-1, k) e^{-\pi((k-1)^2+k^2)}
+\pi\sum_{k=1}^{\infty}C(k,k)e^{-2\pi k^2}.
\end{align*}Using the fact that $e^{-x}$ is a decreasing function,
we find that this term is larger than
\begin{align*}
2\pi\sum_{k=1}^{\infty}\left(C(k-1,1)+\frac{1}{2}C(k,k)\right)
e^{-2\pi k^2},
\end{align*}which, by $(b)$, is positive. This concludes the proof
that $h(v)>0$, and therefore the assertion of the proposition
follows.
\end{proof}

\section{The sign of Epstein zeta function} In this section,
we are going to study the sign of the Xi-function $\Xi_n(s;
a_1,\ldots, a_n)$ when $s\in\R \setminus \{0, n/2\}$. First, since
$\pi^{-s}\Gamma(s)>0$ for $s>0$ and it is obvious from the power
series definition of $Z_{n}(s; a_1, \ldots, a_n)$ \eqref{eq109_5}
that $Z_n(s; a_1,\ldots, a_n)>0$ for all $s>n/2$, we immediately
deduce that
$$\Xi_n(s; a_1, \ldots, a_n)>0 \hspace{1cm}\text{for all}\;\;
s>\frac{n}{2} \;\;\text{and}\;\;(a_1,\ldots, a_n)\in (\R^+)^n.$$
From the functional equation \eqref{eq109_8}, we then obtain
$$\Xi_n(s; a_1, \ldots, a_n)>0 \hspace{1cm}\text{for all}\;\;
s<0 \;\;\text{and}\;\;(a_1,\ldots, a_n)\in (\R^+)^n.$$In other
words, $\Xi_n(s; a_1, \ldots, a_n)>0 $ for all $s\in (-\infty,
0)\cup(n/2, \infty)$. For the remaining case where $s\in (0,n/2)$,
the sign of the function $\Xi_n(s; a_1,\ldots, a_n)$ is not trivial.
Using the formula \eqref{eq1011_1}, we find that
 around the two simple poles $s=0$ and $s=\frac{n}{2}$,
\begin{align*}
\Xi_n(s; a_1,\ldots, a_n)=&-\frac{V}{s}+O(1)\hspace{1cm}\text{as}\;\; s\rightarrow 0,\\
\Xi_n(s; a_1, \ldots,
a_n)=&-\frac{V^{-1}}{\frac{n}{2}-s}+O(1)\hspace{1cm}\text{as}\;\;
s\rightarrow \frac{n}{2}.
\end{align*}Therefore we can conclude that\begin{proposition}\label{p2} For any $a=(a_1,\ldots, a_n)\in(\R^+)^n$, one
can find a right nonempty neighbourhood  $I_1(a)=(0, s_{1}(a))$ of
$s=0$ and a left nonempty neighbourhood $I_2(a)=\left(s_{2}(a),
\frac{n}{2}\right)$ of $n/2$ such that
\begin{align*}
\Xi_n(s; a_1,\ldots, a_n)<0, \hspace{1cm}\text{for all}\;\; s\in
I_1(a)\cup I_2(a).
\end{align*}\end{proposition}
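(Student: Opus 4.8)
The plan is to read off the claimed negativity directly from the meromorphic continuation \eqref{eq1011_1}, whose only genuinely singular terms near $s=0$ and $s=n/2$ are $-V/s$ and $-V^{-1}/(n/2-s)$. Since $V>0$, the coefficient of $1/s$ is $-V<0$ and the coefficient of $1/(n/2-s)$ is $-V^{-1}<0$, so the first term tends to $-\infty$ as $s\downarrow 0$ and the second tends to $-\infty$ as $s\uparrow n/2$; everything else in \eqref{eq1011_1} stays bounded near each pole, and the statement will drop out.

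First I would fix $a=(a_1,\ldots,a_n)\in(\R^+)^n$ and record that the two integrals appearing in \eqref{eq1011_1} converge absolutely and locally uniformly for $s\in\CC$: for $t\ge 1$ the factor $t^{s-1}$ (resp.\ $t^{n/2-s-1}$) is dominated by a fixed power of $t$ on any bounded $s$-range, while the Gaussian sums obey a bound $\sum_{k\in\Z^n\setminus\{0\}}\exp\bigl(-\pi t\sum_{i=1}^n[a_ik_i]^2\bigr)\le C\,e^{-ct}$ for $t\ge 1$, with $c=\pi\min_i a_i^2>0$, and likewise for the dual sum. Hence the two integrals define continuous (indeed entire) functions of $s$. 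Writing $\Xi_n(s;a)=-V/s+\Phi_0(s)$, where $\Phi_0$ collects the term $-V^{-1}/(n/2-s)$ together with the two integrals, the function $\Phi_0$ is continuous on a neighbourhood $(-\delta_1,\delta_1)$ of $s=0$, which we may choose with $\delta_1<n/2$ so that $-V^{-1}/(n/2-s)$ is regular there; thus there is $M_1>0$ with $|\Phi_0(s)|\le M_1$ for $|s|<\delta_1$.

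Then for $0<s<s_1(a):=\min\{\delta_1,\,V/M_1\}$ we obtain $\Xi_n(s;a)\le -V/s+M_1<0$, which establishes the claim on $I_1(a)=(0,s_1(a))$. The argument at $s=n/2$ is symmetric: writing $\Xi_n(s;a)=-V^{-1}/(n/2-s)+\Phi_1(s)$, the remainder $\Phi_1$ (now comprising $-V/s$ and the two integrals) is continuous on an interval $(n/2-\delta_2,\,n/2+\delta_2)$ with $0<\delta_2<n/2$, hence bounded there by some $M_2>0$, and for $n/2-\min\{\delta_2,\,V^{-1}/M_2\}<s<n/2$ we get $\Xi_n(s;a)\le -V^{-1}/(n/2-s)+M_2<0$. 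This gives $I_2(a)=(s_2(a),n/2)$ with $s_2(a)=n/2-\min\{\delta_2,\,V^{-1}/M_2\}$.

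There is no real obstacle here; the only step that needs a line of justification is the uniform boundedness of the integral terms on a neighbourhood of each pole, which follows from the exponential decay estimate for the Gaussian sums quoted above together with dominated convergence. It is worth stressing that the precise values of the singular coefficients play no role --- all that is used is that they are strictly negative, i.e.\ that $V>0$.
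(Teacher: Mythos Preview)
Your proof is correct and follows exactly the paper's approach: the paper simply notes from \eqref{eq1011_1} that $\Xi_n(s;a_1,\ldots,a_n)=-V/s+O(1)$ as $s\to 0$ and $\Xi_n(s;a_1,\ldots,a_n)=-V^{-1}/(\tfrac{n}{2}-s)+O(1)$ as $s\to n/2$, and draws the conclusion immediately. You have merely spelled out the $O(1)$ bounds explicitly by bounding the integral terms and the non-singular pole term on a neighbourhood of each pole.
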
In fact, when $n=1$, using the fact that
$$\Xi_1(s;a)=\sqrt{a}
\pi^{-s}\Gamma(s)Z_1(s;a)=2a^{-2s+\frac{1}{2}}\pi^{-s}\Gamma(s)\zeta_R(2s),$$
 it is easy to verify a stronger result:
\begin{proposition}
$$\Xi_1(s;a)<0,\hspace{1cm}\text{for all}\;\; s\in \left(0,
\frac{1}{2}\right).$$
\end{proposition}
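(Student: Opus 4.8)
The plan is to reduce the claim to the sign of the Riemann zeta function on the interval $(0,1)$. Starting from the explicit formula
$$\Xi_1(s;a)=2a^{-2s+\frac{1}{2}}\pi^{-s}\Gamma(s)\zeta_R(2s),$$
I note that for $a\in\R^+$ the factor $a^{-2s+\frac12}$ is positive, and for $s\in(0,\tfrac12)$ both $\pi^{-s}$ and $\Gamma(s)$ are positive (the gamma function is positive on $(0,\infty)$). Hence the sign of $\Xi_1(s;a)$ coincides with the sign of $\zeta_R(2s)$, and since $2s$ ranges over $(0,1)$ as $s$ ranges over $(0,\tfrac12)$, it suffices to prove that $\zeta_R(x)<0$ for all $x\in(0,1)$.

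To establish this, I would invoke the Dirichlet eta function
$$\eta(x)=\sum_{k=1}^{\infty}\frac{(-1)^{k-1}}{k^{x}},$$
which converges for $x>0$ as an alternating series with terms decreasing in absolute value, so its sum is positive and lies strictly between its first term $1$ and the partial sum $1-2^{-x}$; in particular $\eta(x)>0$ for every $x>0$. The standard identity $\eta(x)=(1-2^{1-x})\zeta_R(x)$, valid by analytic continuation (and directly for $x>1$ by rearranging the series), then gives $\zeta_R(x)=\eta(x)/(1-2^{1-x})$. For $x\in(0,1)$ we have $1-x>0$, hence $2^{1-x}>1$ and $1-2^{1-x}<0$, while $\eta(x)>0$; therefore $\zeta_R(x)<0$ on $(0,1)$.

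Combining the two steps yields $\Xi_1(s;a)<0$ for all $s\in(0,\tfrac12)$ and all $a\in\R^+$, which is the assertion. The only mild subtlety — the ``main obstacle,'' such as it is — is justifying the eta-identity and the positivity of $\eta$ on all of $(0,1)$ rather than merely $(1,\infty)$; this is handled by the uniform convergence of the alternating series on compact subsets of $(0,\infty)$ (so $\eta$ is continuous there) together with the fact that $1-2^{1-x}$ is nonzero on $(0,1)$, so the quotient representation of $\zeta_R$ persists by analytic continuation throughout the interval. Alternatively, one could avoid $\eta$ entirely by citing the classical facts that $\zeta_R(0)=-\tfrac12$, that $\zeta_R$ has no zeros on $[0,1)$, and that $\zeta_R(x)\to-\infty$ as $x\to1^-$, and conclude by continuity; but the eta-function argument is self-contained and elementary, matching the spirit of the paper.
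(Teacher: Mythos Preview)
Your proof is correct and follows essentially the same approach as the paper: reduce to the sign of $\zeta_R$ on $(0,1)$ via the positivity of the prefactors, then use the Dirichlet eta identity $\sum_{k\geq 1}(-1)^{k-1}k^{-x}=(1-2^{1-x})\zeta_R(x)$ together with the positivity of the alternating series and the negativity of $1-2^{1-x}$ on $(0,1)$. Your write-up is in fact slightly more careful than the paper's in spelling out why the eta identity holds on $(0,1)$.
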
\begin{proof}
It suffices to show that $\zeta_R(s)<0$ for all $s\in (0,1)$. For
$s>0$, there is a well-known formula
\begin{align*}
\sum_{k=1}^{\infty}\frac{(-1)^{k-1}}{k^s}=\left(1-2^{1-s}\right)\zeta_R(s).
\end{align*}The alternating series on the left hand side is
convergent and positive for all $s>0$. However, for $s\in (0,1)$,
$2^{1-s}>1$. Therefore, $\zeta_R(s)<0$ for all $s\in (0,1)$. This
completes the proof.
\end{proof}In the study of the zeros of Riemann zeta function, this result asserts that $\zeta_R(s)$ has no nontrivial
zeros in $(0, 1)$, a necessary condition for the validity of Riemann
hypothesis. For $n\geq 2$, in view of Proposition \ref{p2}, we find
that a necessary condition for the validity of the generalized
Riemann hypothesis  for $Z_n(s; a_1,\ldots, a_n)$ at a fixed
$(a_1,\ldots, a_n)\in (\R^+)^n$,  is that  $\Xi_n(s; a_1,\ldots,
a_n)<0$ for all $s\in (0, n/2)$. However, this is not true for all
$(a_1,\ldots, a_n)\in (\R^+)^n$. In fact, we can show that
\begin{proposition}\label{p4}
Given $n\geq 2$, $j\in \{1,\ldots, n\}$ and $s\in (0,n/2)$, if we
fix $a_1,\ldots, a_{j-1}$, $a_{j+1},\ldots, a_n$ and vary $a_j$,
then as $a_j$ is large enough, $\Xi_n(s; a_1,\ldots, a_n)$ is
positive.
\end{proposition}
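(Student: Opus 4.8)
First I would relabel the variables so that $j=n$, and set $a'=(a_1,\ldots,a_{n-1})$, $m=n-1$, $P=\prod_{i=1}^{n-1}a_i>0$. Since $\pi^{-s}\Gamma(s)>0$ for $s>0$ and $V=\sqrt{\prod_i a_i}>0$, the sign of $\Xi_n(s;a_1,\ldots,a_n)$ equals that of $\pi^{-s}\Gamma(s)Z_n(s;a_1,\ldots,a_n)$, so it suffices to show the latter is positive once $a_n$ is large. The plan is to read off the $a_n$-dependence from the Chowla--Selberg formula \eqref{eq830_9}: on its right-hand side, every term except the two indexed by $j=n-1$ involves only $a_1,\ldots,a_{n-1}$, and by \eqref{eq830_9} applied in dimension $n-1$ those terms sum exactly to $\pi^{-s}\Gamma(s)Z_{n-1}(s;a')$; the $j=n-1$ term of the second group equals $\dfrac{2\pi^{\frac m2-s}\Gamma(s-\frac m2)\zeta_R(2s-m)}{P}\,a_n^{\,m-2s}$; and the $j=n-1$ term of the third group, call it $R(a_n)$, is a convergent series of modified Bessel functions $K_{s-m/2}$ whose arguments grow linearly in $a_n$, so $R(a_n)\to0$ exponentially fast as $a_n\to\infty$. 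This gives
\[
\pi^{-s}\Gamma(s)Z_n(s;a_1,\ldots,a_n)=\pi^{-s}\Gamma(s)Z_{n-1}(s;a')+\frac{2\pi^{\frac m2-s}\Gamma(s-\tfrac m2)\zeta_R(2s-m)}{P}\,a_n^{\,m-2s}+R(a_n).
\]

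Next I would handle the coefficient of $a_n^{m-2s}$ with the functional equation $\pi^{-w/2}\Gamma(w/2)\zeta_R(w)=\pi^{-(1-w)/2}\Gamma(\tfrac{1-w}2)\zeta_R(1-w)$ of the Riemann zeta function, taken at $w=2s-m$: this rewrites $\pi^{m/2-s}\Gamma(s-\tfrac m2)\zeta_R(2s-m)$ as $\pi^{s-n/2}\Gamma(\tfrac n2-s)\zeta_R(n-2s)$, and for $0<s<(n-1)/2$ one has $\tfrac n2-s>\tfrac12$ and $n-2s>1$, hence $\Gamma(\tfrac n2-s)>0$ and $\zeta_R(n-2s)>1$: the coefficient is strictly positive. (The functional equation is invoked precisely to move into the half-plane of absolute convergence, where positivity is immediate.) Two cases then finish every $s\neq(n-1)/2$. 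If $(n-1)/2<s<n/2$, then $m-2s<0$, so $a_n^{m-2s}\to0$ and $R(a_n)\to0$, whence $\pi^{-s}\Gamma(s)Z_n\to\pi^{-s}\Gamma(s)Z_{n-1}(s;a')>0$, the positivity because $Z_{n-1}(s;a')$ is then given by its convergent series with positive terms. If $0<s<(n-1)/2$, then $m-2s>0$, so $a_n^{m-2s}\to\infty$ while $Z_{n-1}(s;a')$ (finite, since $s\neq(n-1)/2$) and $R(a_n)$ stay bounded, so the positive leading coefficient forces $\pi^{-s}\Gamma(s)Z_n\to+\infty$. In both cases $\Xi_n(s;a_1,\ldots,a_n)>0$ for $a_n$ large.

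The value not reached by this argument is $s=(n-1)/2$, and I expect this borderline case to be the main obstacle: there $Z_{n-1}(s;a')$ and the coefficient $\Gamma(s-\tfrac m2)\zeta_R(2s-m)$ each have a simple pole in $s$, which must cancel. I would expand about $s=\tfrac m2+w$ as $w\to0$. From \eqref{eq1011_1} in dimension $n-1$ one obtains a Laurent expansion $\pi^{-s}\Gamma(s)Z_{n-1}(s;a')=\tfrac1{Pw}+c(a')+O(w)$ with $c(a')$ independent of $a_n$; and from $\Gamma(w)=\tfrac1w-\gamma+O(w)$, $\zeta_R(2w)=-\tfrac12+O(w)$, $a_n^{-2w}=1-2w\log a_n+O(w^2)$ one gets $\tfrac2P\pi^{-w}\Gamma(w)\zeta_R(2w)a_n^{-2w}=-\tfrac1{Pw}+\tfrac{2\log a_n}{P}+O(1)$. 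The $w^{-1}$ terms cancel, and letting $w\to0$ (legitimate since $Z_n$ is regular at $s=(n-1)/2$) leaves
\[
\pi^{-m/2}\Gamma(\tfrac m2)\,Z_n(\tfrac m2;a_1,\ldots,a_n)=\frac{2\log a_n}{P}+O(1)+R(a_n)\ \longrightarrow\ +\infty
\]
as $a_n\to\infty$ (the $O(1)$ and $R(a_n)$ being bounded in $a_n$), so positivity again holds for $a_n$ large. Alternatively this can be read off from \eqref{eq1011_1} itself: at $s=(n-1)/2$ the last integral there, after substituting $t=a_n^2 r$ and applying the reflection formula \eqref{eq1011_6} to $\vartheta(r)$, contributes a term of order $\tfrac{\sqrt{a_n}}{\sqrt P}\log a_n$, which dominates the $O(\sqrt{a_n})$ coming from the other terms. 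Combined with the two cases above, this establishes the proposition.
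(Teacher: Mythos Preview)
Your argument is correct. Both you and the paper use the Chowla--Selberg expansion \eqref{eq830_9}, but the decompositions differ. The paper first uses the symmetry in the $a_i$ to reduce to proving the claim for \emph{some} index; then, for each $s$, it singles out the unique $m$ with $2s-m\in(0,1)$, observes that the $j=m$ summand in \eqref{eq830_9} is the only negative one, and lets $a_{m+1}\to\infty$ so that this single negative contribution tends to $0$ while a block of positive terms independent of $a_{m+1}$ remains. The half-integer points $s=m/2$ are treated separately via the explicit formula \eqref{eq1023_2}, where a $\log(a_{m+1}/a_m)$ term appears. You instead exploit the \emph{recursive} structure of \eqref{eq830_9}: the summands with $j\le n-2$ reassemble exactly into $\pi^{-s}\Gamma(s)Z_{n-1}(s;a')$, so the entire $a_n$-dependence sits in the two $j=n-1$ contributions, whose asymptotics you read off directly. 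This buys you a uniform choice of variable (always $a_n$) and only one borderline case $s=(n-1)/2$ handled by a Laurent expansion, whereas the paper's approach avoids that expansion at the cost of choosing a different variable $a_{m+1}$ for each $s$ and treating all the half-integers $s=m/2$ up front. The sign analysis is also organized differently: the paper checks positivity of each retained summand, while you invoke the Riemann functional equation to move $\zeta_R(2s-m)$ into its half-plane of absolute convergence and get positivity of the leading coefficient in one stroke.
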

\begin{proof}Since $\Xi_n (s; a_1,\ldots, a_n)$ is symmetric in the
variables $a_1,\ldots, a_n$, it is sufficient to show that given
$n\geq 2$ and $s\in (0,n/2)$, there exists $j\in \{1,\ldots, n\}$
such that if we fix $a_1,\ldots, a_{j-1}$, $a_{j+1},\ldots, a_n$ and
vary $a_j$, then as $a_j$ is large enough, $\Xi_n(s; a_1,\ldots,
a_n)$ is positive.

 First we consider the case $s=\frac{m}{2}$, where $m\in
\{1,2,\ldots, n-1\}$. Equation \eqref{eq830_9} and the fact that
(see \cite{e10})
\begin{align*}\pi^{-s}\Gamma\left(s\right)\zeta_R(2s)
=& -\frac{1}{2}\left\{\frac{1}{s}+\log
(4\pi)+\psi(1)\right\}+O(s),\hspace{1cm}\text{as}\;\; s\rightarrow
0,\\\pi^{-s}\Gamma\left(s\right)\zeta_R(2s) =&
\frac{1}{2}\left\{\frac{1}{s-\frac{1}{2}}-\log
(4\pi)-\psi(1)\right\}+O\left(s-\frac{1}{2}\right),\hspace{1cm}\text{as}\;\;
s\rightarrow \frac{1}{2},\nonumber
\end{align*}
give
\begin{align}\label{eq1023_2}
&V^{-1}\Xi_n\left(\frac{m}{2}; a_1,\ldots,
a_n\right)=\frac{1}{\prod_{l=1}^m
a_l}\left\{\log\frac{a_{m+1}}{a_m}-\log(4\pi)-\psi(1)\right\}\\
+\nonumber &2\sum_{j\in\{0,1,\ldots,n-1\}\setminus\{m-1,
m\}}\frac{\pi^\frac{j-m}{2}\Gamma\left(\frac{m-j}{2}\right)}
{a_{j+1}^{m-j}\prod_{l=1}^{j}a_l} \zeta_R(m-j) +4 \sum_{j=1}^{n-1}
\frac{1}{\prod_{l=1}^{j}a_l}\times\nonumber\\\nonumber&\sum_{\mathbf{k}\in
\Z^j\setminus\{\mathbf{0}\}}\sum_{p=1}^{\infty}\frac{1}{(pa_{j+1})^{\frac{m-j}{2}}}\left(\sum_{l=1}^j
\left[\frac{k_l}{a_l}\right]^2\right)^{\frac{m-j}{4}}K_{\frac{m-j}{2}}\left(
2\pi p a_{j+1}\sqrt{\sum_{l=1}^j
\left[\frac{k_l}{a_l}\right]^2}\right).
\end{align}Since the function $\pi^{-s}\Gamma(s)\zeta_R(2s)$ is
strictly positive for $s<0$ or $s>\frac{1}{2}$ and the function
$K_s(z)$ is positive for all $s\in\R$ and $z\in \R^+$, the last two
terms of \eqref{eq1023_2} are always positive. The sign of the first
term depend on the ratio $a_{m+1}/a_m$. It is easy to see that   if
$a_1,\ldots, a_m, a_{m+2},\ldots, a_n$ are kept fixed, then for
$a_{m+1}$ large, the first term is positive and therefore
$\Xi_{n}\left(\frac{m}{2}; a_1, \ldots, a_n\right)>0$. This
establishes the proposition when $s\in \{m/2\,:\, m=1,2,\ldots,
n-1\}$.

For the general case,  given $s\in (0, n/2)\setminus\{m/2\,:\,
m=1,2,\ldots, n-1\}$,  there is a unique $m\in \{0, \ldots, n-1\}$
such that $0<2s-m<1$. Equation \eqref{eq830_9} gives
\begin{align*}
&V^{-1}\Xi_n(s; a_1,\ldots,
a_n)=\frac{\pi^{\frac{m}{2}-s}\Gamma\left(s-\frac{m}{2}\right)}
{a_{m+1}^{2s-m}\prod_{l=1}^{m}a_l} \zeta_R(2s-m)\\
\nonumber
&+2\sum_{j=0}^{m-1}\frac{\pi^{\frac{j}{2}-s}\Gamma\left(s-\frac{j}{2}\right)}
{a_{j+1}^{2s-j}\prod_{l=1}^{j}a_l} \zeta_R(2s-j)
+2\sum_{j=m+1}^{n-1}\frac{\pi^{\frac{j}{2}-s}\Gamma\left(s-\frac{j}{2}\right)}
{a_{j+1}^{2s-j}\prod_{l=1}^{j}a_l} \zeta_R(2s-j) +4 \sum_{j=1}^{n-1}
\frac{1}{\prod_{l=1}^{j}a_l}\times\\\nonumber&\sum_{\mathbf{k}\in
\Z^j\setminus\{\mathbf{0}\}}\sum_{p=1}^{\infty}\frac{1}{(pa_{j+1})^{s-\frac{j}{2}}}\left(\sum_{l=1}^j
\left[\frac{k_l}{a_l}\right]^2\right)^{\frac{s}{2}-\frac{j}{4}}K_{s-\frac{j}{2}}\left(
2\pi p a_{j+1}\sqrt{\sum_{l=1}^j
\left[\frac{k_l}{a_l}\right]^2}\right).
\end{align*}Only the first term is negative. If $a_1,\ldots, a_{m},
a_{m+2},\ldots, a_n$ are fixed and $a_{m+1}\rightarrow \infty$, the
second term is fixed but the first term approaches zero. Therefore,
 $\Xi_n(s; a_1,\ldots,
a_n)>0$ when $a_{m+1}$ is large enough, and our assertion is proved.

We would like to remark that the case  $s=\frac{m}{2}$, where $m\in
\{1,2,\ldots, n-1\}$ can actually be deduced from the general case
$s\in (0, n/2)\setminus\{m/2\,:\, m=1,2,\ldots, n-1\}$ using the
joint continuity of $\Xi_n(s; a_1, \ldots, a_n)$ as a function of
$s$ and $(a_1,\ldots, a_n)$.
\end{proof} This proposition shows that if $s\in (0, n/2)$, then when one of the ratios between the
$a_i$'s, $1\leq i\leq n$ is large, the Epstein zeta function
$Z_{n}(s; a_1,\ldots, a_n)$ is positive, and therefore the
generalized Riemann hypothesis does not hold for this zeta function.
It is then interesting to ask whether the Riemann hypothesis will
still be valid for some $(a_1,\ldots, a_n)\in (\R^+)^n$. Another
interesting question which is related to our problem \cite{e10} of
determining whether there exists symmetry breaking mechanisms in an
interacting fractional Klein--Gordon field is, for what values of
$s$, the Xi-function $\Xi_n(s; a_1,\ldots, a_n)$ will be negative
for some $(a_1,\ldots, a_n)$. Recall that we have proved in section
2 that for any fixed $s$, the minimum of $Z_{n}(s; a_1,\ldots,
a_n)$, as a function of $(a_1,\ldots, a_n)\in (\R^+)^n$ with
$\prod_{i=1} a_i$ fixed, appears at the point $a_1=\ldots=a_n$.
Therefore to answer   these two questions, we need  to  study  the
function $\Xi_n(s)=\Xi_n(s; 1,\ldots, 1)$ first.

From \eqref{eq1011_1}, we find that
\begin{align}\label{eq1015_3}
\Xi_n(s)=-\frac{1}{s} -\frac{1}{\frac{n}{2}-s}+\int_1^{\infty}
t^{s-1}\left(\vartheta(t)^n-1\right)dt+\int_1^{\infty}
t^{\frac{n}{2}-1-s}\left(\vartheta(t)^n-1\right)dt.
\end{align}
In order to have a more unified treatment for all $n$, we define
$\hat{\Xi}_n(s)=\Xi_n\left(\frac{ns}{2}\right)$. Then
 \begin{align}\label{eq1015_10}
\hat{\Xi}_n(s)=-\frac{2}{n}\left(\frac{1}{s}
+\frac{1}{1-s}\right)+\int_1^{\infty}
t^{\frac{ns}{2}-1}\left(\vartheta(t)^n-1\right)dt+\int_1^{\infty}
t^{\frac{n}{2}(1-s)-1}\left(\vartheta(t)^n-1\right)dt.
\end{align} This formula shows that for all $s\in
(0,1)$ and $n\in \mathbb{N}$,
\begin{align}\label{eq1015_1}\hat{\Xi}_{n+1}(s)
>\hat{\Xi}_{n}(s).\end{align} Moreover, for any fixed $s$, since
\begin{align*}
\frac{2}{n}\left(\frac{1}{s}+\frac{1}{1-s}\right) \longrightarrow 0
\hspace{1cm}\text{as}\;\; n\rightarrow \infty,
\end{align*}and
\begin{align*}
&\int_1^{\infty} t^{\frac{ns}{2}-1}
\left(\vartheta(t)^n-1\right)dt+\int_1^{\infty}t^{\frac{n}{2}(1-s)-1}\left(\vartheta(t)^n-1\right)dt
\\\geq & \int_1^{\infty} t^{\frac{s}{2}-1}
\left(\vartheta(t)^n-1\right)dt+\int_1^{\infty}t^{\frac{1}{2}(1-s)-1}\left(\vartheta(t)^n-1\right)dt>0,
\end{align*}therefore $\hat{\Xi}_n(s)$ becomes positive when $n$ is
large enough. In fact for $s=1/2$, a lengthy analysis (31 pages) has
been done in \cite{g2} and it was proved that\begin{lemma}\label{l1}
\begin{align*}\begin{cases}\Xi_n\left(\frac{n}{4}\right)=\hat{\Xi}_n\left(\frac{1}{2}\right)<0,
\hspace{1cm}\text{if}\;\;& 1\leq n\leq 9;\\
\Xi_n\left(\frac{n}{4}\right)=\hat{\Xi}_n\left(\frac{1}{2}\right)>0,
\hspace{1cm}\text{if}\;\; & n\geq 10.
\end{cases}
\end{align*}\end{lemma}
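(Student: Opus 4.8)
The plan is to use the monotonicity \eqref{eq1015_1} to reduce the infinitely many cases to just $n=9$ and $n=10$. Since $\hat\Xi_{n+1}(1/2)>\hat\Xi_n(1/2)$ for all $n$, it suffices to prove $\hat\Xi_9(1/2)<0$ and $\hat\Xi_{10}(1/2)>0$: the first forces $\hat\Xi_n(1/2)<0$ for every $1\le n\le 9$, and the second forces $\hat\Xi_n(1/2)>0$ for every $n\ge 10$. Setting $s=1/2$ in \eqref{eq1015_10} collapses the two integrals there into one, so
\[
\hat\Xi_n(1/2)=-\frac{8}{n}+2J_n,\qquad J_n:=\int_1^\infty t^{n/4-1}\bigl(\vartheta(t)^n-1\bigr)\,dt ,
\]
and the whole lemma reduces to the two numerical inequalities $J_9<\frac49$ and $J_{10}>\frac25$.

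To estimate $J_n$ I would expand $\vartheta(t)^n=\sum_{m\ge 0}r_n(m)e^{-\pi m t}$, where $r_n(m)=\#\{v\in\Z^n:|v|^2=m\}$ and $r_n(0)=1$, and integrate term by term, getting $J_n=\sum_{m\ge1}r_n(m)(\pi m)^{-n/4}\Gamma\bigl(\tfrac n4,\pi m\bigr)$. The series decays like $e^{-\pi m}$, and the small coefficients are elementary: $r_n(1)=2n$, $r_n(2)=2n(n-1)$, $r_n(3)=\tfrac43n(n-1)(n-2)$, and so on, so only a handful of terms are relevant. For $n=10$ the exponent $\tfrac n4-1=\tfrac32$ makes the incomplete gamma function explicit, $\Gamma(\tfrac52,x)=\tfrac34\sqrt\pi\,\mathrm{erfc}(\sqrt x)+e^{-x}\sqrt x(\tfrac32+x)$; dropping the positive $\mathrm{erfc}$ term gives $J_{10}\ge r_{10}(1)\,\pi^{-5/2}e^{-\pi}\sqrt\pi(\tfrac32+\pi)$, and a one-line check shows $-\tfrac{8}{10}+2r_{10}(1)\pi^{-5/2}e^{-\pi}\sqrt\pi(\tfrac32+\pi)>0$ — the single term $m=1$ already does it. For $n=9$ the exponent $\tfrac n4-1=\tfrac54$ is not a half-integer, so I would bound $t^{5/4}\le\tfrac12(t+t^{3/2})$ using convexity of $x\mapsto t^x$ on $t\ge 1$; then $\int_1^\infty te^{-\pi m t}\,dt$ is elementary and $\int_1^\infty t^{3/2}e^{-\pi m t}\,dt=(\pi m)^{-5/2}\Gamma(\tfrac52,\pi m)$ is bounded above using $\mathrm{erfc}(\sqrt x)\le e^{-x}/\sqrt{\pi x}$. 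Evaluating these bounds for $m=1,2$ and estimating the tail by the trivial inequality $\sum_{m\ge M}r_9(m)e^{-\pi m t}\le e^{-\pi M(t-1)}\bigl(\vartheta(1)^9-1-\sum_{m<M}r_9(m)e^{-\pi m}\bigr)$ for $t\ge 1$ (so the tail integral is already of order $10^{-2}$ at $M=3$ and smaller beyond), one obtains $J_9<\frac49$, hence $\hat\Xi_9(1/2)<0$.

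The soft parts — the reduction via \eqref{eq1015_1}, the termwise integration, and the exponential smallness of the tail — are routine. The one delicate point is the $n=9$ estimate: the margin $\frac49-J_9$ is only about $0.03$, so the upper bound on $\Gamma(\tfrac52,\pi m)$ has to be reasonably sharp — the lazier bound $t^{3/2}\le t^2$, which keeps every integral elementary, is too wasteful and fails to deliver $J_9<\frac49$ — and the small $r_9(m)$ must be counted correctly. As an alternative to the theta-series estimate one could read off $\Xi_9(9/4)=\hat\Xi_9(1/2)$ from the Chowla--Selberg expansion \eqref{eq830_9} at $a_1=\dots=a_9=1$, but that requires handling $\zeta_R$ and $\Gamma$ at negative and half-integer arguments and strikes me as more error-prone than the route above.
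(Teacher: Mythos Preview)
Your proposal is correct and follows essentially the same route as the paper: reduce to $n=9,10$ via the monotonicity \eqref{eq1015_1}, rewrite $\hat\Xi_n(1/2)=-\tfrac{8}{n}+2\sum_{m\ge 1}r_n(m)(\pi m)^{-n/4}\Gamma(n/4,\pi m)$ (this is exactly the paper's formula \eqref{eq1015_4} at $s=n/4$, grouped by $|k|^2=m$), then show the single term $m=1$ already gives $\hat\Xi_{10}(1/2)>0$ and bound the tail to get $\hat\Xi_9(1/2)<0$.

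The only real difference is in the bound on $\Gamma(9/4,x)$. The paper uses the truncated asymptotic expansion \eqref{eq1015_8} coming from repeated integration by parts, which handles the exponent $9/4$ directly. You instead interpolate $t^{5/4}\le\tfrac12(t+t^{3/2})$ by convexity to push the exponent to $1$ and $3/2$, where the integrals are elementary or reducible to $\Gamma(5/2,\cdot)$ and $\mathrm{erfc}$. Both tricks deliver the needed inequality with a comparable (small) margin; the paper's bound is a bit more systematic and avoids the extra erfc estimate, while yours keeps all the special functions at half-integer order. For the tail the paper uses the crude majorization $e^{-\pi|k|^2}\le e^{-\pi\sum_i|k_i|}$ to sum everything into a closed form $\bigl((e^\pi+1)/(e^\pi-1)\bigr)^9$, which is slicker than your theta-remainder estimate but amounts to the same thing.
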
\begin{proof}Here we provide a much shorter proof of this result.
Using the inequality \eqref{eq1015_1}, we need
only to show that
\begin{align}\label{eq1015_2}
\Xi_9\left(\frac{9}{4}\right)<0
\hspace{1cm}\text{and}\hspace{1cm}\Xi_{10}\left(\frac{5}{2}\right)>0.
\end{align}Using the incomplete gamma function
\begin{align*}
\Gamma(\beta,x)=\int_x^{\infty} t^{\beta-1} e^{-t} dt,
\hspace{1cm}x>0,
\end{align*}we can rewrite  \eqref{eq1015_3} as
\begin{align}\label{eq1015_4}
\Xi_n(s)=&-\frac{1}{s}-\frac{1}{\frac{n}{2}-s}
+\sum_{k\in\Z^n\setminus\{0\}} \left(\pi
|k|^2\right)^{-s}\Gamma\left( s, \pi|k|^2\right)
\\&+\sum_{k\in\Z^n\setminus\{0\}} \left(\pi
|k|^2\right)^{-\frac{n}{2}+s}\Gamma\left( \frac{n}{2}-s,
\pi|k|^2\right),\nonumber
\end{align}where for $k\in\Z^n$,
$|k|^2=\sum_{i=1}^n k_i^2$. We need to obtain upper and lower bounds
for $\Gamma(\beta, x)$. Using integration by parts, it is easy to
show that when $\beta>0$,
\begin{align*}
\Gamma(\beta, x)=x^{\beta-1}e^{-x}\sum_{j=0}^m \frac{\prod_{i=1}^j
(\beta-i)}{x^j}+\left[\prod_{i=1}^{m+1}(\beta-i)\right]\int_x^{\infty}
t^{\beta-m-2}e^{-t}dt.
\end{align*}From this, we find that
\begin{align}\label{eq1015_8}
\Gamma(\beta,x)\leq x^{\beta-1}e^{-x}\sum_{j=0}^{[\beta]}
\frac{\prod_{i=1}^j (\beta-i)}{x^j}, \hspace{1cm}\Gamma(\beta,x)\geq
x^{\beta-1}e^{-x}\sum_{j=0}^{[\beta]+1} \frac{\prod_{i=1}^j
(\beta-i)}{x^j}.
\end{align}Therefore from    \eqref{eq1015_4}, we have
\begin{align*}
\Xi_9\left(\frac{9}{4}\right)\leq
-\frac{8}{9}+2\sum_{k\in\Z^9\setminus\{0\}}\left(\pi|k|^2\right)^{-1}e^{-\pi|k|^2}\sum_{j=0}^{2}
\frac{\prod_{i=1}^j \left(\frac{9}{4}-i\right)}{(\pi|k|^2)^j}.
\end{align*}For the summation over $k\in \Z^9\setminus\{0\}$, we
single out the 18 terms that contribute to $|k|=1$, i.e. we write
\begin{align}\label{eq1015_7}
&\sum_{k\in\Z^9\setminus\{0\}}\left(\pi|k|^2\right)^{-1}e^{-\pi|k|^2}\sum_{j=0}^{2}
\frac{\prod_{i=1}^j \left(\frac{9}{4}-i\right)}{(\pi|k|^2)^j}\\
=&\frac{18}{\pi}e^{-\pi}\sum_{j=0}^2\frac{\prod_{i=1}^j
\left(\frac{9}{4}-i\right)}{\pi^j}+\sum_{k\in\Z^9\setminus\{k\,:\,
|k|=0\;\text{or}\;
1\}}\left(\pi|k|^2\right)^{-1}e^{-\pi|k|^2}\sum_{j=0}^{2}
\frac{\prod_{i=1}^j
\left(\frac{9}{4}-i\right)}{(\pi|k|^2)^j}.\nonumber
\end{align}The first term of \eqref{eq1015_7} gives
\begin{align}\label{eq1015_6}
\frac{18}{\pi}e^{-\pi}\sum_{j=0}^2\frac{\prod_{i=1}^j
\left(\frac{9}{4}-i\right)}{\pi^j}=0.3540;
\end{align}whereas the second term of \eqref{eq1015_7} is
\begin{align}\label{eq1015_5}
&\sum_{k\in\Z^9\setminus\{k\,:\, |k|=0\;\text{or}\;
1\}}\left(\pi|k|^2\right)^{-1}e^{-\pi|k|^2}\sum_{j=0}^{2}
\frac{\prod_{i=1}^j \left(\frac{9}{4}-i\right)}{(\pi|k|^2)^j}\\\leq
& \frac{1}{2\pi}\sum_{j=0}^{2} \frac{\prod_{i=1}^j
\left(\frac{9}{4}-i\right)}{(2\pi
)^j}\sum_{k\in\Z^9\setminus\{k\,:\, |k|=0\;\text{or}\; 1\}}
e^{-\pi|k|^2}.\nonumber
\end{align}Using a simple inequality
\begin{align*}
\exp\left(-\pi \sum_{i=1}^n k_i^2\right)\leq
\exp\left(-\pi\sum_{i=1}^n |k_i|\right),
\end{align*}we find that
\begin{align*}
\sum_{k\in\Z^9\setminus\{k\,:\, |k|=0\;\text{or}\; 1\}}
e^{-\pi|k|^2}=&\sum_{k\in\Z^9} e^{-\pi|k|^2}-1-18e^{-\pi}\\
\leq & \sum_{k\in\Z^9} e^{-\pi\sum_{i=1}^9|k_i|}-1-18e^{-\pi}\\
=&\left(1+2\sum_{m=1}^{\infty}e^{-\pi m}\right)^9-1-18e^{-\pi}\\
=&\left(\frac{e^{\pi}+1}{e^{\pi}-1}\right)^9-1-18e^{-\pi}.
\end{align*}Therefore, the term \eqref{eq1015_5} is bounded above by
\begin{align*}
\frac{1}{2\pi}\sum_{j=0}^{2} \frac{\prod_{i=1}^j
\left(\frac{9}{4}-i\right)}{(2\pi
)^j}\left(\left(\frac{e^{\pi}+1}{e^{\pi}-1}\right)^9-1-18e^{-\pi}\right)=0.0769.
\end{align*}Combine with \eqref{eq1015_6}, we find that
\eqref{eq1015_7} is bounded above by
\begin{align*}
0.3540+0.0769=0.4309<\frac{4}{9}.
\end{align*}Therefore, $\Xi_9(9/4)<0$.
Now for $\Xi_{10}(5/2)$, a similar argument but using the lower
bound in \eqref{eq1015_8} for the incomplete gamma function gives
\begin{align*}
\Xi_{10}\left(\frac{5}{2}\right)>&-\frac{4}{5}+2\sum_{k\in\Z^{10}\setminus\{0\}}
\left(\pi|k|^2\right)^{-1}e^{-\pi|k|^2}\sum_{j=0}^{3}
\frac{\prod_{i=1}^j \left(\frac{5}{2}-i\right)}{(\pi|k|^2)^j}\\
>&-\frac{4}{5}+\frac{40}{\pi}e^{-\pi}\sum_{j=0}^{3}
\frac{\prod_{i=1}^j \left(\frac{5}{2}-i\right)}{\pi ^j}=0.04808>0.
\end{align*}

As a side remark, using  standard mathematics softwares such as
MATLAB and MATHEMATICA, it is easy to  compute $\Xi_n(s)$ to any
desired degree of accuracy from formula \eqref{eq1015_3} or
\eqref{eq1015_4}. Up to $10^{-12}$, we have
\begin{align*}
\Xi_9\left(\frac{9}{4}\right)=-0.065884758538,\\
\Xi_{10}\left(\frac{5}{2}\right) =0.205903040487.
\end{align*}

\end{proof}

Lemma \ref{l1}, Proposition \ref{p2} and the inequality
\eqref{eq1015_1} immediately implies that
\begin{proposition}\label{p5}
For all $n\geq 10$, the  sets
\begin{align}\label{eq1016_1}I_{n,+}=&\left\{ s\in(0, n/2)\,:\,
\Xi_{n}(s)>0\right\},\\ \label{eq1017_1} I_{n,-}=&\left\{ s\in(0,
n/2)\,:\, \Xi_{n}(s)<0\right\}\end{align}are open nonempty subsets
of $(0, n/2)$. Moreover, $n/4\in I_{n,+}$.
\end{proposition}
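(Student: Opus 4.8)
The plan is to derive the statement as a formal consequence of Lemma~\ref{l1}, Proposition~\ref{p2}, and the analyticity of $\Xi_n$ on the open interval $(0,n/2)$. First I would observe, from formula~\eqref{eq1015_3}, that $\Xi_n(s)$ is real-analytic (in particular continuous) on $(0,n/2)$: the rational terms $-1/s$ and $-1/\left(\tfrac{n}{2}-s\right)$ are analytic away from $s=0$ and $s=n/2$, and each of the two integrals $\int_1^\infty t^{\sigma-1}\left(\vartheta(t)^n-1\right)dt$ converges absolutely and locally uniformly in $\sigma$ because $\vartheta(t)-1=O\!\left(e^{-\pi t}\right)$ as $t\to\infty$, hence defines an entire function of $s$. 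Consequently $I_{n,+}=\Xi_n^{-1}\!\left((0,\infty)\right)$ and $I_{n,-}=\Xi_n^{-1}\!\left((-\infty,0)\right)$ are preimages of open sets under a continuous map, so both are open subsets of $(0,n/2)$.

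It remains to show each is nonempty. For $I_{n,+}$, Lemma~\ref{l1} asserts exactly that $\Xi_n\!\left(\tfrac{n}{4}\right)=\hat{\Xi}_n\!\left(\tfrac12\right)>0$ for every $n\ge 10$; thus $\tfrac{n}{4}\in I_{n,+}$, which gives both $I_{n,+}\ne\emptyset$ and the final assertion of the proposition. (If one wishes to lean on only the single numerical inequality $\Xi_{10}(5/2)>0$, the monotonicity~\eqref{eq1015_1} yields $\hat{\Xi}_n\!\left(\tfrac12\right)>\hat{\Xi}_{10}\!\left(\tfrac12\right)>0$ for all $n\ge 10$, the same conclusion.) For $I_{n,-}$, I would apply Proposition~\ref{p2} at the point $a_1=\cdots=a_n=1$: it produces a nonempty right neighbourhood $I_1=(0,s_1)$ of $s=0$ on which $\Xi_n(s)=\Xi_n(s;1,\ldots,1)<0$, so $I_1\subseteq I_{n,-}$ and $I_{n,-}\ne\emptyset$.

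There is no genuine obstacle in this argument; it is a bookkeeping combination of results already in hand. The only point deserving a line of justification is the continuity (indeed analyticity) of $\Xi_n$ on $(0,n/2)$ used for openness, and that is immediate from~\eqref{eq1015_3} via the standard exponential-decay estimate on the theta integrals noted above.
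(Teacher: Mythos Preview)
Your argument is correct and matches the paper's own reasoning: the paper states, in a single sentence, that the proposition follows immediately from Lemma~\ref{l1}, Proposition~\ref{p2}, and the inequality~\eqref{eq1015_1}. You have simply made explicit the continuity argument needed for openness, which the paper takes for granted.
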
In fact, from the functional equation
$\Xi_n(s)=\Xi_n((n/2)-s)$, one can even conclude that the intervals
$I_{n,+}$ and $I_{n,-}$ are symmetric with respect to the point
$n/4$, i.e.
$$\frac{n}{2}-I_{n,+}:=\left\{\frac{n}{2}-s\,:\, s\in I_{n,+}\right\}=I_{n,+}$$ and similarly for
$I_{n,-}$. Since $$\lim_{s\rightarrow 0^+}\Xi_n(s)=-\infty,$$ there
must exists an odd number of points $\gamma_{n,1}$, $\ldots$,
$\gamma_{n,2m_n+1}$ so that $0<\gamma_{n,1}<\gamma_{n,2}\leq
\gamma_{n,3}<\ldots<\gamma_{n, 2m_n}\leq \gamma_{n,2m_n+1}<n/4$ and
$$I_{n, +}= \bigcup_{i=1}^{m_n}\left( \gamma_{n,2i-1}, \gamma_{n,
2i}\right) \bigcup \left(\gamma_{n, 2m_n+1}, \frac{n}{2}-\gamma_{n,
2m_n+1}\right)\bigcup_{i=1}^{m_n}\left(\frac{n}{2}- \gamma_{n,2i},
\frac{n}{2}-\gamma_{n, 2i-1}\right).$$ For $n\leq 9$, Lemma \ref{l1}
is not sufficient to show that  $\Xi_n(s)<0$ for all $s\in (0,
n/2)$. Nevertheless, one can employ the same method as the proof in
Lemma \ref{l1} to show that
\begin{proposition}\label{p30}
$\Xi_9(s)<0$ for all $s\in (0, 9/2)$.
\end{proposition}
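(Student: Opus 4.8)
The plan is to deduce the statement from a finite list of explicit numerical inequalities, established by the incomplete‑gamma estimates already used to prove Lemma \ref{l1}. By the functional equation $\Xi_9(s)=\Xi_9\!\left(\tfrac92-s\right)$ it suffices to prove $\Xi_9(s)<0$ on $\left(0,\tfrac94\right]$, the endpoint $s=\tfrac94$ being supplied by Lemma \ref{l1}. I would rewrite \eqref{eq1015_3} (equivalently \eqref{eq1015_4}) as
\[
\Xi_9(s)=f(s)+T(s)+T\!\left(\tfrac92-s\right),\qquad f(s)=-\frac1s-\frac1{\tfrac92-s},\qquad T(\beta)=\sum_{m\ge1}r_9(m)\,(\pi m)^{-\beta}\,\Gamma(\beta,\pi m),
\]
where $r_9(m)=\#\{k\in\Z^9:|k|^2=m\}$, so $r_9(1)=18$ and $r_9(2)=144$; note also $T(\beta)=\int_1^\infty u^{\beta-1}\bigl(\vartheta(u)^9-1\bigr)\,du\ge0$. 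Two elementary monotonicity facts drive everything: $f$ is strictly increasing on $\left(0,\tfrac94\right)$, so $f(s)\le f\!\left(\tfrac94\right)=-\tfrac89$ there; and, because $u^{\beta-1}$ is nondecreasing in $\beta$ on $[1,\infty)$ while $\vartheta(u)^9-1>0$, the function $T$ is nondecreasing.

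I would then fix a partition $0=c_0<c_1<\dots<c_N=\tfrac94$. On $[c_{i-1},c_i]$ monotonicity of $f$ and $T$ gives $\Xi_9(s)\le f(c_i)+T(c_i)+T\!\left(\tfrac92-c_{i-1}\right)$, so it is enough to choose the $c_i$, and explicit upper bounds $T^{+}$ for $T$ at the finitely many arguments that occur, so that $f(c_i)+T^{+}(c_i)+T^{+}\!\left(\tfrac92-c_{i-1}\right)<0$ for every $i$. The bounds $T^{+}$ come exactly as in Lemma \ref{l1}: apply the first inequality of \eqref{eq1015_8} to each $\Gamma(\beta,\pi m)$ (legitimate since for $\beta>0$ the discarded remainder carries the factor $\prod_{i=1}^{[\beta]+1}(\beta-i)\le0$), retain the $m=1$ term — and, for the sharpness needed here, also the $m=2$ term — explicitly, and control the remainder by $\sum_{m\ge3}r_9(m)e^{-\pi m}\le\bigl(\tfrac{e^{\pi}+1}{e^{\pi}-1}\bigr)^{9}-1-18e^{-\pi}-144e^{-2\pi}$, a consequence of $e^{-\pi|k|^2}\le e^{-\pi\sum_i|k_i|}$. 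A partition into about five pieces does the job; the decisive one is the last, roughly $\left[2,\tfrac94\right]$, which requires $T\!\left(\tfrac94\right)+T\!\left(\tfrac52\right)<\tfrac89$, while near $s=0$ the term $f$ is so negative that the corresponding inequality is easy.

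The one real difficulty is numerical tightness. Whereas Lemma \ref{l1} pins down the single value $\Xi_9(\tfrac94)\approx-0.066$, here one must bound $\Xi_9$ uniformly, and on the subintervals abutting $s=\tfrac94$ (and throughout $[1,2]$) the quantity $f(c_i)+T^{+}(c_i)+T^{+}(\tfrac92-c_{i-1})$ comes out negative by only a percent or two; accordingly the gamma expansion \eqref{eq1015_8} must be carried to several terms and the tail split off at $m=3$ rather than at $m=2$, exactly as above (or, equivalently, one inserts a couple of extra partition points near those spots). Either way the check is finite and — consistently with the values quoted after Lemma \ref{l1} — has genuine slack. The interchanges of summation and integration used for $T$ are justified by absolute convergence.
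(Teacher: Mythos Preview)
Your proposal is correct and follows essentially the same route as the paper: reduce to $(0,9/4]$ by the functional equation, split $\Xi_9(s)$ into the rational pole part $f(s)=\mathcal{Z}_1(s)$ and the positive integral part, exploit monotonicity on $(0,9/4]$, and verify negativity on a finite partition using the incomplete-gamma upper bounds of \eqref{eq1015_8} together with the crude tail estimate $\sum e^{-\pi|k|^2}\le\bigl(\tfrac{e^\pi+1}{e^\pi-1}\bigr)^9$. The only cosmetic difference is that the paper keeps $\mathcal{Z}_2(s)=T(s)+T(9/2-s)$ together as a single decreasing function, giving the slightly sharper bound $f(c_i)+T(c_{i-1})+T(9/2-c_{i-1})$ on each subinterval; with this it needs only four subintervals ($0.95,\,1.55,\,2,\,2.25$) and never has to split off the $m=2$ term.
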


Then the inequality \eqref{eq1015_1} shows that
\begin{proposition}\label{p6}
For all $1\leq n\leq 9$, the function $\Xi_{n}(s)$ is negative for
all $s\in (0, n/2)$.
\end{proposition}

We give the proof of Proposition \ref{p30} here.
\begin{proof}By the reflection formula $\Xi_9(s)=\Xi_9((9/2)-s)$, it
is sufficient to show that $\Xi_9(s)<0$ for all $s\in (0, 9/4]$.
Let
\begin{align*}
\mathcal{Z}_1(s)=-\frac{1}{s}-\frac{1}{\frac{9}{2}-s}
\end{align*}and
\begin{align*}
\mathcal{Z}_2(s)=\int_1^{\infty}
t^{s-1}\left(\vartheta(t)^9-1\right)dt+\int_1^{\infty}
t^{\frac{9}{2}-s-1}\left(\vartheta(t)^9-1\right)dt,
\end{align*}so that $\Xi_9(s)=\mathcal{Z}_1(s)+\mathcal{Z}_2(s)$. By taking derivatives with respect to $s$,
it is easy to verify that as functions of $s$, $\mathcal{Z}_1(s)$ is
increasing on $(0,9/4]$ and $\mathcal{Z}_2(s)$ is decreasing on $[0,
9/4]$. Using the arguments in the proof of Lemma \ref{l1}, we find
that
\begin{align*}
\mathcal{Z}_2(0)\leq& \frac{18}{\pi}
e^{-\pi}+\frac{1}{2\pi}\left(\left(\frac{e^{\pi}+1}{e^{\pi}-1}\right)^9-1-18e^{-\pi}\right)
\\&+\frac{18}{\pi}e^{-\pi}\sum_{j=0}^{4} \frac{\prod_{i=1}^j
\left(\frac{9}{2}-i\right)}{\pi^j}+\frac{1}{2\pi}\sum_{j=0}^{4}
\frac{\prod_{i=1}^j \left(\frac{9}{2}-i\right)}{(2\pi
)^j}\left(\left(\frac{e^{\pi}+1}{e^{\pi}-1}\right)^9-1-18e^{-\pi}\right)\\=&1.2926.
\end{align*}Now we want to find an $x$ so that
$\mathcal{Z}_1(x)<-1.2926$. By try and error method, we find that
$x=0.95$ satisfies the required condition. In fact,
\begin{align*}
\mathcal{Z}_1(0.95)=-1.3343.
\end{align*}Therefore, for all $s\in (0, 0.95]$,
\begin{align*}
\Xi_9(s)=\mathcal{Z}_1(s)+\mathcal{Z}_2(s)\leq
\mathcal{Z}_1(0.95)+\mathcal{Z}_2(0)\leq -0.0417<0.
\end{align*}Repeating these steps, we have first
\begin{align*}
\mathcal{Z}_2(0.95)\leq & \frac{18}{\pi}
e^{-\pi}+\frac{1}{2\pi}\left(\left(\frac{e^{\pi}+1}{e^{\pi}-1}\right)^9-1-18e^{-\pi}\right)
\\&+\frac{18}{\pi}e^{-\pi}\sum_{j=0}^{3} \frac{\prod_{i=1}^j
\left(3.55-i\right)}{\pi^j}+\frac{1}{2\pi}\sum_{j=0}^{3}
\frac{\prod_{i=1}^j \left(3.55-i\right)}{(2\pi
)^j}\left(\left(\frac{e^{\pi}+1}{e^{\pi}-1}\right)^9-1-18e^{-\pi}\right)\\&=0.9728.
\end{align*}Then by try and error, we find that
\begin{align*}
\mathcal{Z}_1(1.55)=-0.9841.
\end{align*}Therefore, for all $s\in [0.95, 1.55]$,
\begin{align*}
\Xi_9(s)=\mathcal{Z}_1(s)+\mathcal{Z}_2(s)\leq
\mathcal{Z}_1(1.55)+\mathcal{Z}_2(0.95)\leq -0.0113<0.
\end{align*}Repeating again, we find that
\begin{align*}
\mathcal{Z}_2(1.55)\leq & \frac{18}{\pi}
e^{-\pi}\sum_{j=0}^{1}\frac{\prod_{i=1}^j
\left(1.55-i\right)}{\pi^j}+\frac{1}{2\pi}\sum_{j=0}^{1}\frac{\prod_{i=1}^j
\left(1.55-i\right)}{(2\pi)^j}\left(\left(\frac{e^{\pi}+1}{e^{\pi}-1}\right)^9-1-18e^{-\pi}\right)
\\&+\frac{18}{\pi}e^{-\pi}\sum_{j=0}^{2} \frac{\prod_{i=1}^j
\left(2.95-i\right)}{\pi^j}+\frac{1}{2\pi}\sum_{j=0}^{2}
\frac{\prod_{i=1}^j \left(2.95-i\right)}{(2\pi
)^j}\left(\left(\frac{e^{\pi}+1}{e^{\pi}-1}\right)^9-1-18e^{-\pi}\right)\\=&0.8943,
\end{align*}and\begin{align*}
\mathcal{Z}_1(2)=-0.9.
\end{align*}Therefore, for all $s\in [1.55, 2]$,
\begin{align*}
\Xi_9(s)=\mathcal{Z}_1(s)+\mathcal{Z}_2(s)\leq
\mathcal{Z}_1(2)+\mathcal{Z}_2(1.55)\leq -0.0057<0.
\end{align*}Finally,
\begin{align*}
\mathcal{Z}_2(2)\leq & \frac{18}{\pi}
e^{-\pi}\sum_{j=0}^{2}\frac{\prod_{i=1}^j
\left(2-i\right)}{\pi^j}+\frac{1}{2\pi}\sum_{j=0}^{2}\frac{\prod_{i=1}^j
\left(2-i\right)}{(2\pi)^j}\left(\left(\frac{e^{\pi}+1}{e^{\pi}-1}\right)^9-1-18e^{-\pi}\right)
\\&+\frac{18}{\pi}e^{-\pi}\sum_{j=0}^{2} \frac{\prod_{i=1}^j
\left(2.5-i\right)}{\pi^j}+\frac{1}{2\pi}\sum_{j=0}^{2}
\frac{\prod_{i=1}^j \left(2.5-i\right)}{(2\pi
)^j}\left(\left(\frac{e^{\pi}+1}{e^{\pi}-1}\right)^9-1-18e^{-\pi}\right)\\
=&0.8649,
\end{align*}and
\begin{align*}
\mathcal{Z}_1(2.25)=-0.8889,
\end{align*}which implies that for all $s\in[2, 2.25]$,
\begin{align*}\Xi_9(s)=\mathcal{Z}_1(s)+\mathcal{Z}_2(s)\leq
\mathcal{Z}_1(2.25)+\mathcal{Z}_2(2)\leq -0.0240<0.
\end{align*}This completes the proof of the proposition.
\end{proof}

\begin{figure}\centering \epsfxsize=.49\linewidth
\epsffile{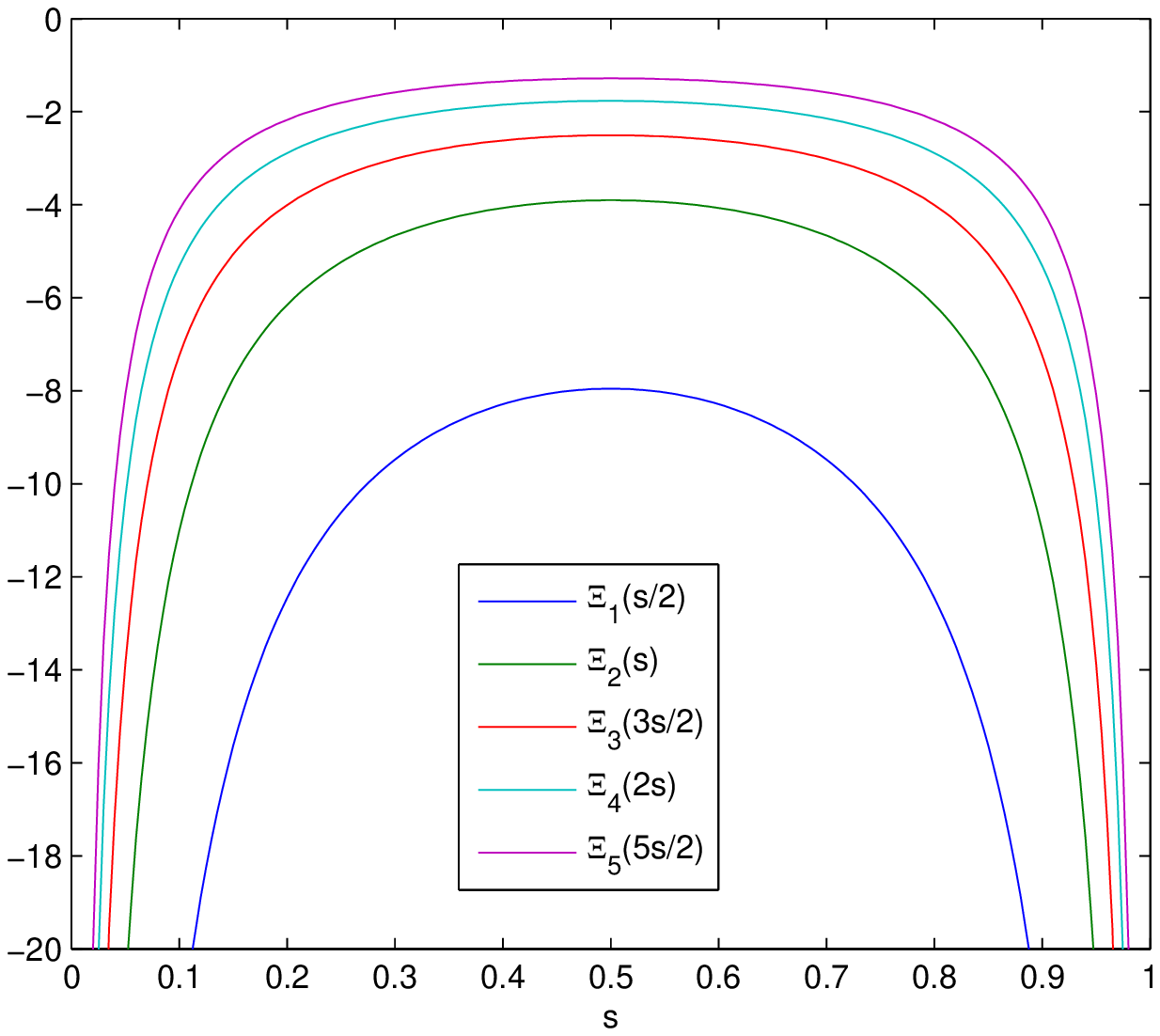}  \epsfxsize=.49\linewidth
\epsffile{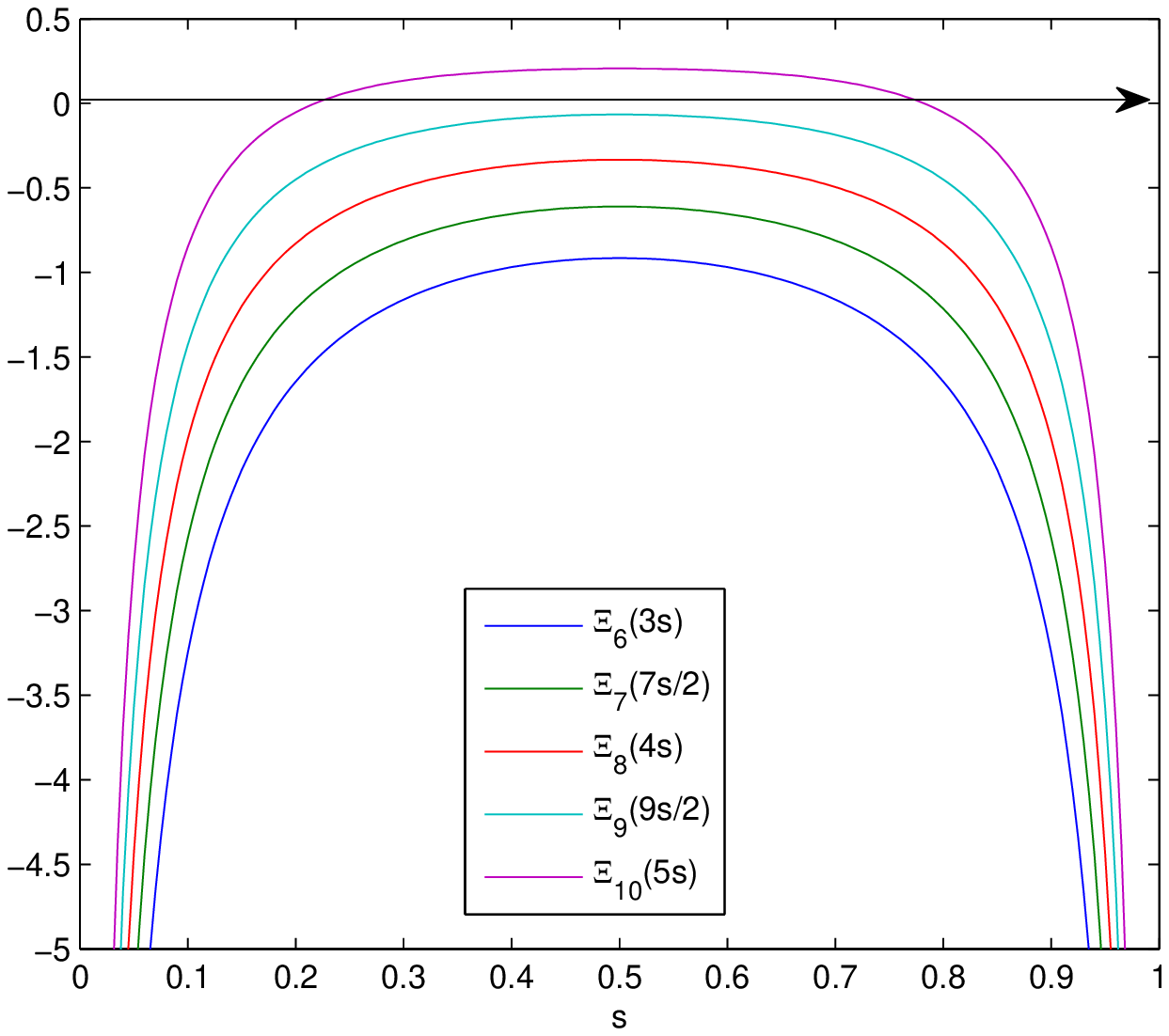}\caption{The graphs of
$\hat{\Xi}_n(s)=\Xi_n\left(\frac{ns}{2}\right)$ for $s\in (0,1)$.}
\end{figure}
 In Figure 1, we show the
graphs of $\hat{\Xi}_n(s)=\Xi_n(ns/2)$ for $s\in (0, 1)$ and $1\leq
n\leq 10$. From the graphs, we see that for $1\leq n\leq 10$,
$\hat{\Xi}_n(s)$ is increasing for $s\in (0, 1/2)$, decreasing for
$s\in (1/2, 1)$, and $s=1/2$ is a local maximum of $\hat{\Xi}_n(s)$.
One would tend to jump into the conclusion that this will be true
for all $n$. However, this is not the case. Since $\hat{\Xi}_n(s)$
satisfies the functional equation $\hat{\Xi}_n(s)=\hat{\Xi}_n(1-s)$,
it is easy to verify that $\hat{\Xi}_n'(1/2)=0$ and therefore
$s=1/2$ is indeed a local extremum of $\hat{\Xi}_n(s)$. In order to
determine the nature of this extremum, it is necessary to look at
the second derivative of $\hat{\Xi}_n(s)$. We obtain from
\eqref{eq1015_10} that
\begin{align}\label{eq1015_11}
\hat{\Xi}^{\prime\prime}_n(s)=&-\frac{4}{n}\left(\frac{1}{s^3}+\frac{1}{(1-s)^3}\right)
+\left(\frac{n}{2}\right)^2\int_1^{\infty} t^{\frac{ns}{2}-1}(\log
t)^2 \left(\vartheta(t)^n-1\right)dt\\&+\left(\frac{n}{2}\right)^2
\int_1^{\infty}t^{\frac{n}{2}(1-s)-1}\left(\vartheta(t)^n-1\right)dt.\nonumber
\end{align}From here, it is easy to verify that
\begin{align}\label{eq1015_12}\hat{\Xi}^{\prime\prime}_{n+1}(s)
>\hat{\Xi}^{\prime\prime}_n(s).\end{align} Using the same argument as
employed in showing that for any fixed $s\in (0, n/2)$,
$\hat{\Xi}_n(s)>0$ when $n$ is large enough, we can use
\eqref{eq1015_11} to prove that for any fixed $s\in (0,n/2)$,
$\Xi_n^{\prime\prime}(s)>0$ whenever $n$ is large enough. This
implies that the point $n/4$ will become a local minimum of
$\Xi_n(s)$ when $n$ is large enough. In fact, using numerical
computation, we find that
\begin{figure}\centering \epsfxsize=.49\linewidth
\epsffile{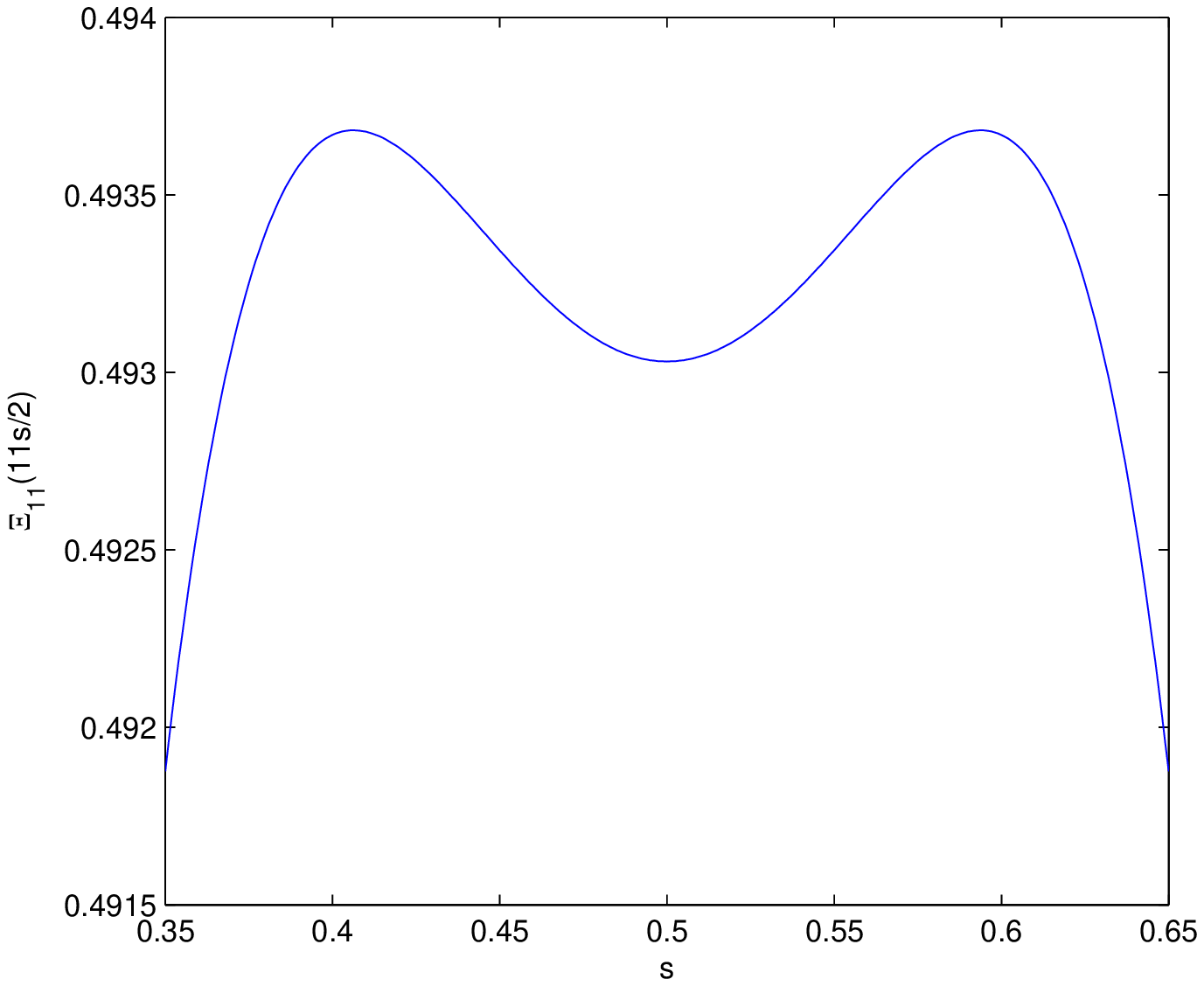}  \epsfxsize=.49\linewidth
\epsffile{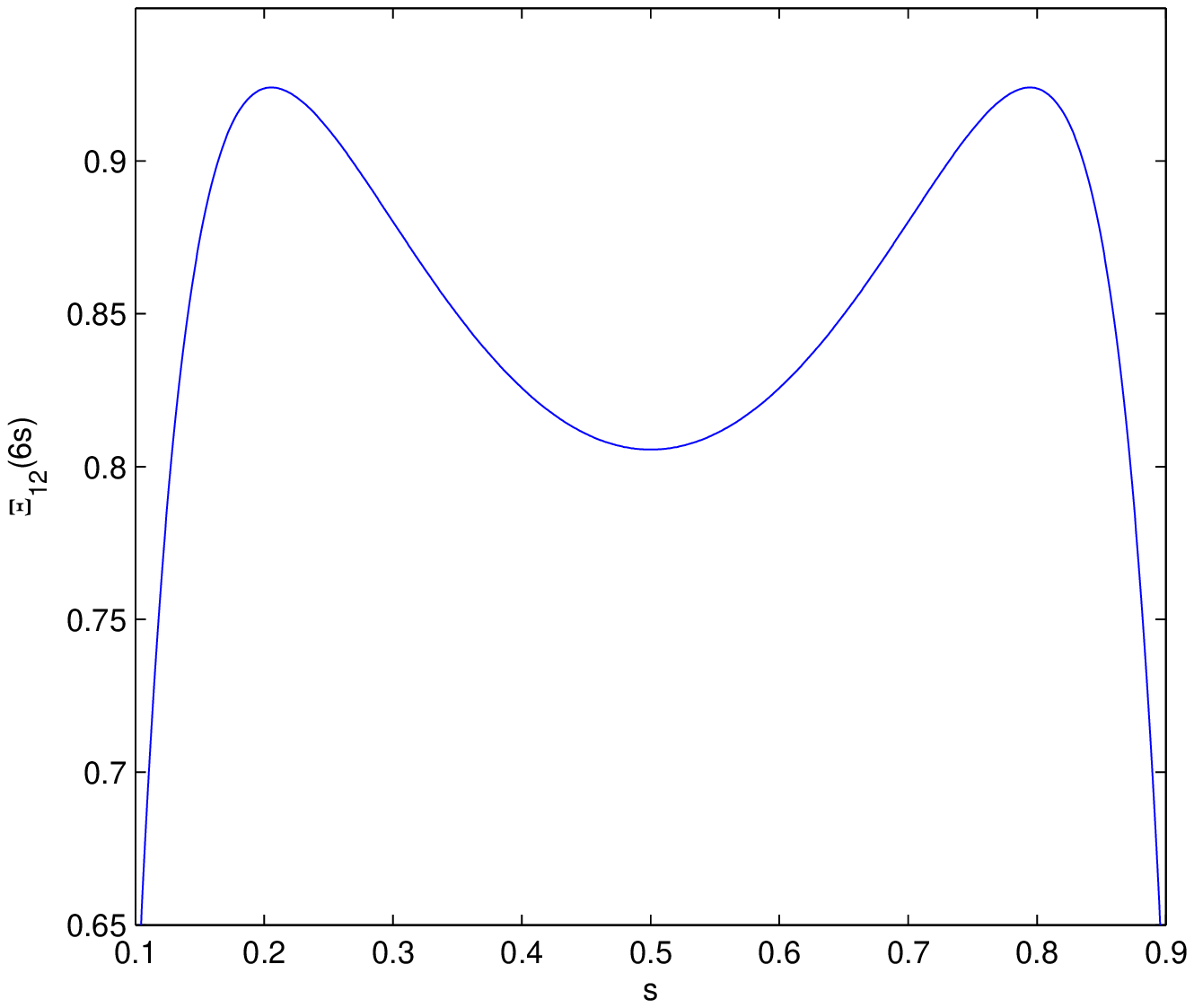} \caption{The graphs of
$\hat{\Xi}_{11}(s)=\Xi_{11}\left(\frac{11s}{2}\right)$ and
$\hat{\Xi}_{12}(s)=\Xi_{12}(6s)$.}
\end{figure}
\begin{align*}
\Xi^{\prime\prime}_{10}\left(\frac{5}{2}\right)=-0.101080515709,\\
\Xi^{\prime\prime}_{11}\left(\frac{11}{4}\right)=0.009568954836,
\end{align*}which together with \eqref{eq1015_12}, imply that for $1\leq n\leq 10$,
$s=n/4$ is a local maximum of $\Xi_n(s)$; whereas for $n\geq 11$,
$s=n/4$ is a local minimum of $\Xi_n(s)$. In Figure 2, we show
graphically that $s=1/2$ is a local minimum of $\hat{\Xi}_{11}(s)$
and $\hat{\Xi}_{12}(s)$.

 In Table 1, we tabulate the
interval $I_{n,+}, 10\leq n\leq 21$ where $\Xi_n(s)$ is positive.

\vspace{0.2cm} \noindent Table 1: The interval $I_{n,+}$ for $10\leq
n\leq 21$.

\vspace{0.2cm}\noindent
\begin{tabular}{|c|c||c|c|}
\hline
\hspace{0.5cm}$n$\hspace{0.5cm}  & $I_{n,+}$ & \hspace{0.5cm} $n$ \hspace{0.5cm} & $I_{n,+}$\\
\hline
 10 & \hspace{0.5cm}(1.0899, 3.9101)\hspace{0.5cm} & 11 & \hspace{0.5cm} (0.6401, 4.8599)\hspace{0.5cm} \\
 12 & (0.3976,    5.6024) & 13 & \hspace{0.5cm}(0.2498, 6.2502)\hspace{0.5cm} \\
 14 & (0.1562,    6.8438) & 15 & \hspace{0.5cm}(0.0964,    7.4036)\hspace{0.5cm} \\
 16 & (0.0585,    7.9415) & 17 & \hspace{0.5cm}(0.0348,   8.4652)\hspace{0.5cm} \\
 18 & (0.0202,    8.9798) & 19 & \hspace{0.5cm}(0.0115,    9.4885)\hspace{0.5cm} \\
 20 & (0.0064,    9.9936) & 21 & \hspace{0.5cm}(0.0034,   10.4966)\hspace{0.5cm} \\
 \hline
\end{tabular}

\vspace{0.5cm} One notices that for $10\leq n\leq 21$, $m_n=0$,
$I_{n,+}=(\gamma_{n,1}, (n/2)-\gamma_{n,1})$, $\gamma_{n,1}$ is
decreasing and $I_{n,+}\subseteq I_{n+1, +}$. In fact, from  the
formula \eqref{eq1015_3}, it is easy to verify that for $s\in (0,
n/2)$,
$$\Xi_{n+1}(s)>\Xi_n(s).$$ This shows that $I_{n,+}\subseteq I_{n+1,
+}$ for all $n\geq 10$. From this table, it is also natural to
conjecture that $I_{n,+}$ is an open connected interval with center
at $n/4$ for all $n\geq 10$. However, to prove this would require a
very detailed analysis of $\Xi_n(s)$ and its higher derivatives. We
do not intend to deal further into this problem here.

We now return to the discussion about the sign of the general
Xi-function $\Xi_n(s;a_1,\ldots, a_n)$ in the range $s\in (0, n/2)$.
Since we have shown in Theorem \ref{th1} that for any fixed $s$, the
point $a_1=\ldots=a_n$ is the minimum  of $\Xi_n(s;a_1,\ldots, a_n),
(a_1,\ldots, a_n)\in(\R^n)^+$ with $\prod_{i=1}^n a_n$ fixed, now
together with Proposition \ref{p5}, Proposition \ref{p6} and
Proposition \ref{p2}, we obtain
\begin{proposition}
For $1\leq n\leq 9$, there exists a nonempty region containing the
ray $a_1=\ldots=a_n$ in $(\R^n)^+$ where $\Xi_n(s;a_1,\ldots, a_n)$
is negative for all $s\in (0,n/2)$.
\end{proposition}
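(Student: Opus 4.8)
The plan is to deduce the statement from three facts already established: (i) the diagonal value $\Xi_n(s):=\Xi_n(s;1,\ldots,1)$ is strictly negative on $(0,n/2)$ for $1\le n\le 9$ (Proposition~\ref{p6}); (ii) the homogeneity $\Xi_n(s;\lambda a_1,\ldots,\lambda a_n)=\lambda^{n/2-2s}\Xi_n(s;a_1,\ldots,a_n)$, which follows from $Z_n(s;\lambda a)=\lambda^{-2s}Z_n(s;a)$ together with $\sqrt{\prod_i(\lambda a_i)}=\lambda^{n/2}\sqrt{\prod_i a_i}$; and (iii) the representation \eqref{eq1011_1}, which displays the behaviour of $\Xi_n(s;a)$ near its only two poles $s=0,\,n/2$ in a form controllable uniformly in $a$. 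Since $\lambda^{n/2-2s}>0$, fact (ii) shows the sign of $\Xi_n(s;\cdot)$ is constant along rays through the origin in $(\R^+)^n$, so it suffices to produce an open neighbourhood $\mathcal N$ of $(1,\ldots,1)$ inside the slice $S:=\{(a_1,\ldots,a_n)\in(\R^+)^n:\prod_i a_i=1\}$ on which $\Xi_n(s;a)<0$ for all $s\in(0,n/2)$; the cone $\{\lambda a:\lambda>0,\ a\in\mathcal N\}$ is then open in $(\R^+)^n$ (the map $(\lambda,a)\mapsto\lambda a$ is a diffeomorphism of $\R^+\times S$ onto $(\R^+)^n$), contains the full ray $a_1=\ldots=a_n$, and inherits the negativity.

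To build $\mathcal N$, fix $\epsilon\in(0,n/4)$ and split $(0,n/2)$ into the compact middle interval $[\epsilon,\,n/2-\epsilon]$ and the two ``polar'' ends $(0,\epsilon]$ and $[n/2-\epsilon,\,n/2)$. On the middle interval I would invoke joint continuity of $(s,a)\mapsto\Xi_n(s;a)$ on $(0,n/2)\times(\R^+)^n$ — immediate from \eqref{eq1011_1}, since there the two $t$-integrals converge locally uniformly — together with the fact that $s\mapsto\Xi_n(s)$ is continuous and, by Proposition~\ref{p6}, bounded above by some $-2\delta<0$ on the compact set $[\epsilon,n/2-\epsilon]$. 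A standard compactness argument then yields an open neighbourhood $\mathcal N_0$ of $(1,\ldots,1)$ in $S$ with $\Xi_n(s;a)<-\delta<0$ for all $s\in[\epsilon,n/2-\epsilon]$ and $a\in\mathcal N_0$.

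For the ends I would use that on $S$ one has $V=1$, so \eqref{eq1011_1} reads $\Xi_n(s;a)=-\tfrac1s-\tfrac1{n/2-s}+\Lambda_n(s;a)$ with $\Lambda_n(s;a)\ge 0$, and I would derive from \eqref{eq1011_2_1} a crude uniform bound on $\Lambda_n$ over the box $B:=(1/2,2)^n\cap S$: since $\vartheta$ is decreasing, $\prod_i\vartheta(t a_i^2)\le\vartheta(t/4)^n$ and $\prod_i\vartheta(t/a_i^2)\le\vartheta(t/4)^n$ for $a\in B$, while $t^{s-1}\le t^{n/4-1}$ and $t^{n/2-s-1}\le t^{n/2-1}$ for $s\in(0,n/4]$, $t\ge1$; as $\vartheta(t/4)^n-1$ decays exponentially, this gives a finite constant $M=M(n)$ with $\Lambda_n(s;a)\le M$ for all $s\in(0,n/4]$ and $a\in B$. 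Choosing $\epsilon:=\min\{n/4,\,1/(M+1)\}$ then forces $\Xi_n(s;a)\le -1/s+M\le -1/\epsilon+M\le -1<0$ for $s\in(0,\epsilon]$ and $a\in B$. The remaining end $[n/2-\epsilon,\,n/2)$ is reduced to this one by the functional equation $\Xi_n(s;a_1,\ldots,a_n)=\Xi_n\big(\tfrac n2-s;\,a_1^{-1},\ldots,a_n^{-1}\big)$ coming from \eqref{eq109_8}, since $a\mapsto(a_1^{-1},\ldots,a_n^{-1})$ maps $B$ onto itself. Setting $\mathcal N:=\mathcal N_0\cap B$ completes the construction.

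The main obstacle is the non-compactness of the open interval $(0,n/2)$: near the poles $s=0$ and $s=n/2$ one cannot argue by continuity and compactness alone, and must instead control the blow-up $\Xi_n(s;a)\to-\infty$ uniformly for $a$ near the diagonal. This is exactly what the decomposition \eqref{eq1011_1} into the polar part $-V/s-V^{-1}/(n/2-s)$ and the nonnegative, locally uniformly bounded remainder $\Lambda_n$ provides — in effect a uniform-in-$a$ strengthening of the reasoning behind Proposition~\ref{p2} — with the functional equation halving the work by identifying the two ends.
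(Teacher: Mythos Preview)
Your proof is correct. The paper itself does not give a detailed argument for this proposition: it simply states that the result follows from Theorem~\ref{th1} together with Propositions~\ref{p2}, \ref{p5} and~\ref{p6}, leaving the reader to supply the passage from ``$\Xi_n(s)<0$ for every $s\in(0,n/2)$'' to ``$\Xi_n(s;a)<0$ for every $s\in(0,n/2)$ on a fixed neighbourhood of the diagonal''. You have correctly identified that this passage is not a one-line continuity/compactness argument because $(0,n/2)$ is not compact, and you handle the two polar ends via the explicit decomposition $\Xi_n=-1/s-1/(n/2-s)+\Lambda_n$ on the slice $V=1$, with a crude but adequate uniform bound on $\Lambda_n$. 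One small presentational point: you introduce $\epsilon$ before you determine it from $M$; logically the order is to bound $\Lambda_n$ on $B$ first, set $\epsilon=\min\{n/4,\,1/(M+1)\}$, and only then run the compactness argument on $[\epsilon,n/2-\epsilon]$.

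A slightly slicker alternative, in the same spirit, avoids the case split entirely: on the slice $S=\{\prod a_i=1\}$ the function $F(s,a):=s\bigl(\tfrac{n}{2}-s\bigr)\Xi_n(s;a)$ extends continuously to $[0,n/2]\times S$ with $F(0,a)=F(n/2,a)=-n/2$ (read off from \eqref{eq1011_1}), and by Proposition~\ref{p6} one has $F(s,1,\ldots,1)<0$ on the whole compact interval $[0,n/2]$. Joint continuity on the compact set $[0,n/2]\times K$ (for any compact neighbourhood $K\subset S$ of the diagonal point) then yields an open $\mathcal N\ni(1,\ldots,1)$ with $F<0$, hence $\Xi_n<0$, on $(0,n/2)\times\mathcal N$. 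Your homogeneity step then extends $\mathcal N$ to a cone containing the full ray, exactly as you wrote.
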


\begin{proposition}
If $n\geq 10$, then for every $(a_1,\ldots, a_n)\in (\R^+)^n$, there
exists a nonempty open subset of $(0, n/2)$ where $\Xi_n(s;
a_1,\ldots, a_n)$ is positive and a nonempty open subset of $(0,
n/2)$ where $\Xi_n(s; a_1,\ldots, a_n)$ is negative.

\end{proposition}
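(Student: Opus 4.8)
The plan is to assemble ingredients that are already in place: Proposition~\ref{p2} handles the negative part, while the combination of Theorem~\ref{th1} with Proposition~\ref{p5} (equivalently Lemma~\ref{l1}) handles the positive part. For the negative subset there is essentially nothing to do: Proposition~\ref{p2} already asserts that for every $a=(a_1,\ldots,a_n)\in(\R^+)^n$ there is a nonempty right neighbourhood $I_1(a)=(0,s_1(a))$ of $s=0$ on which $\Xi_n(s;a_1,\ldots,a_n)<0$, and $I_1(a)$ is an open nonempty subset of $(0,n/2)$. This works for every $n$, in particular $n\ge 10$.

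For the positive subset I would reduce everything to the special value $\Xi_n(s)=\Xi_n(s;1,\ldots,1)$ using the minimum principle of Theorem~\ref{th1} together with the homogeneity of $Z_n$. Put $\mu=\left(\prod_{i=1}^n a_i\right)^{1/n}>0$ and write $a_i=\mu b_i$ with $\prod_{i=1}^n b_i=1$. From the scaling relation $Z_n(s;\lambda a_1,\ldots,\lambda a_n)=\lambda^{-2s}Z_n(s;a_1,\ldots,a_n)$ noted after \eqref{eq830_9} and the definition \eqref{eq109_10} of $\Xi_n$ (whose prefactor is $V=\sqrt{\prod a_i}$), one gets $\Xi_n(s;a_1,\ldots,a_n)=\mu^{\frac{n}{2}-2s}\,\Xi_n(s;b_1,\ldots,b_n)$. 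By Theorem~\ref{th1} we have $\Xi_n(s;b_1,\ldots,b_n)\ge \Xi_n(s)$ for every $s\in\R\setminus\{0,n/2\}$, and since $\mu>0$ forces $\mu^{\frac{n}{2}-2s}>0$, this yields
$$\Xi_n(s;a_1,\ldots,a_n)\;\ge\;\mu^{\frac{n}{2}-2s}\,\Xi_n(s),\qquad s\in\R\setminus\{0,n/2\}.$$
Now invoke Proposition~\ref{p5}: for $n\ge 10$ the set $I_{n,+}=\{s\in(0,n/2):\Xi_n(s)>0\}$ is a nonempty open subset of $(0,n/2)$ (containing $n/4$). For every $s\in I_{n,+}$ the displayed inequality gives $\Xi_n(s;a_1,\ldots,a_n)\ge\mu^{\frac{n}{2}-2s}\Xi_n(s)>0$, so $\{s\in(0,n/2):\Xi_n(s;a_1,\ldots,a_n)>0\}\supseteq I_{n,+}$ is nonempty; it is open because $s\mapsto\Xi_n(s;a_1,\ldots,a_n)$ is continuous (indeed analytic) on $(0,n/2)$ — the representation \eqref{eq1011_1} shows its only singularities are at $s=0$ and $s=n/2$.

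I do not expect a genuine obstacle at this stage: the substantive fact that $\Xi_n(s)$ actually becomes positive somewhere in $(0,n/2)$ once $n\ge 10$ is precisely Lemma~\ref{l1}/Proposition~\ref{p5}, already proved. The only point requiring a little care is the scaling bookkeeping — one must check that moving from the normalization $\prod a_i=1$ to an arbitrary fixed product multiplies $\Xi_n$ only by the \emph{positive} factor $\mu^{\frac{n}{2}-2s}$, so that not merely the location of the minimum but its \emph{sign} is inherited from $\Xi_n(s)$. With that observation in hand, the proposition follows by combining Proposition~\ref{p2}, Theorem~\ref{th1}, and Proposition~\ref{p5}.
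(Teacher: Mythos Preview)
Your argument is correct and follows the same route the paper takes: the paper derives this proposition directly from Theorem~\ref{th1}, Proposition~\ref{p5}, and Proposition~\ref{p2} without spelling out details, and your write-up simply makes the scaling step $\Xi_n(s;\mu b_1,\ldots,\mu b_n)=\mu^{n/2-2s}\Xi_n(s;b_1,\ldots,b_n)$ explicit so that Theorem~\ref{th1} (stated for $\prod a_i=1$) transfers to arbitrary $(a_1,\ldots,a_n)$.
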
This proposition implies that
generalized Riemann hypothesis is not true for all the Epstein zeta
function of the form $Z_{n}(s; a_1,\ldots, a_n)$ (i.e. Epstein zeta
function of positive definite diagonal quadratic forms) when $n\geq
10$.

Now consider the Epstein zeta function as a function of
$(a_1,\ldots, a_n)\in(\R^+)^n$ with $s$ fixed, we can conclude with
the help of Propositions \ref{p4} and \ref{p6} that
\begin{proposition}\label{p7}
If $1\leq n\leq 9$,  then for any fixed $s\in (0, n/2)$, there
exists a nonempty open region $\Omega_{s,n}^+$ of $(a_1,\ldots,
a_n)\in(\R^+)^n$ where $\Xi_{n}(s; a_1,\ldots, a_n)$ is positive and
a nonempty open region $\Omega_{s,n}^-$ of $(a_1,\ldots,
a_n)\in(\R^+)^n$ where $\Xi_{n}(s; a_1,\ldots, a_n)$ is negative.
The ray $a_1=\dots=a_n$ lies inside the region $\Omega_{s,n}^-$.
\end{proposition}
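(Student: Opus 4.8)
The plan is to exhibit both regions as level sets of the single map $(a_1,\ldots,a_n)\mapsto\Xi_n(s;a_1,\ldots,a_n)$ and to produce one point of each sign. Writing $a=(a_1,\ldots,a_n)$, I would set
\[
\Omega_{s,n}^{-}=\bigl\{a\in(\R^+)^n:\Xi_n(s;a)<0\bigr\},\qquad
\Omega_{s,n}^{+}=\bigl\{a\in(\R^+)^n:\Xi_n(s;a)>0\bigr\},
\]
and first record two structural facts. By the homogeneity $Z_n(s;\lambda a_1,\ldots,\lambda a_n)=\lambda^{-2s}Z_n(s;a_1,\ldots,a_n)$ noted after \eqref{eq830_9}, together with $V(\lambda a)=\lambda^{n/2}V(a)$, one gets $\Xi_n(s;\lambda a)=\lambda^{\frac n2-2s}\,\Xi_n(s;a)$ with $\lambda^{\frac n2-2s}>0$ for $\lambda>0$; hence $\Omega_{s,n}^{\pm}$ are cones, and the sign of $\Xi_n(s;\cdot)$ is constant along the diagonal ray $a_1=\cdots=a_n$, equal to the sign of $\Xi_n(s)=\Xi_n(s;1,\ldots,1)$. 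Secondly, $a\mapsto\Xi_n(s;a)$ is continuous on $(\R^+)^n$ for fixed $s\notin\{0,n/2\}$, so $\Omega_{s,n}^{\pm}$ are open, being preimages of $(-\infty,0)$ and $(0,\infty)$.

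Granting these two facts, the argument is short. For the negative region, Proposition~\ref{p6} gives $\Xi_n(s)<0$ for all $1\le n\le 9$ and all $s\in(0,n/2)$, so by constancy of the sign along the ray the whole ray $a_1=\cdots=a_n$ lies in $\Omega_{s,n}^{-}$; in particular $\Omega_{s,n}^{-}$ is nonempty and contains the ray, as asserted. For the positive region (here I use $n\ge 2$), Proposition~\ref{p4} says that after fixing, say, $a_1=\cdots=a_{n-1}=1$ and taking $a_n$ large enough one has $\Xi_n(s;a_1,\ldots,a_n)>0$, so $\Omega_{s,n}^{+}\neq\emptyset$. (For $n=1$ one checks directly that $\Xi_1(s;a)=2a^{\frac12-2s}\pi^{-s}\Gamma(s)\zeta_R(2s)<0$ throughout $(0,\tfrac12)$, so $\Omega_{s,1}^{+}=\emptyset$ and only the negativity part survives; the positivity assertion is to be read for $2\le n\le 9$.) Since $\Omega_{s,n}^{\pm}$ are open, this finishes the proof modulo the two recorded facts.

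The only step needing genuine work, and the place I would be most careful, is the joint continuity of $a\mapsto\Xi_n(s;a)$. I would deduce it from \eqref{eq1011_1}: the factor $V=\sqrt{\prod_i a_i}$ and the two rational terms $-V/s$, $-V^{-1}/(\tfrac n2-s)$ are visibly continuous on $(\R^+)^n$, and the two integrals are handled by dominated convergence. On a compact set $K\subset(\R^+)^n$ with $0<c\le a_i^2\le C$, monotonicity of $\vartheta$ bounds the first integrand by $t^{s-1}\bigl(\vartheta(ct)^n-1\bigr)$ and the second by $t^{\frac n2-s-1}\bigl(\vartheta(t/C)^n-1\bigr)$; both majorants are independent of $a\in K$, decay like $e^{-\pi ct}$ and $e^{-\pi t/C}$ respectively, and hence are integrable on $[1,\infty)$ for every real $s$, so the two integrals are continuous on $K$, and therefore $\Xi_n(s;\cdot)$ is continuous on all of $(\R^+)^n$. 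Everything else is immediate from Propositions~\ref{p4} and~\ref{p6} and the scaling identity, so beyond this routine domination I expect no further obstacle.
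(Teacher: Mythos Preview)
Your proposal is correct and follows exactly the paper's route: the paper simply states that the proposition follows ``with the help of Propositions~\ref{p4} and~\ref{p6}'', and you have filled in the routine details (continuity, scaling, openness) that the paper leaves implicit. Your observation that the positivity claim fails for $n=1$ (since $\Xi_1(s;a)<0$ for all $a>0$ and $s\in(0,\tfrac12)$, so $\Omega_{s,1}^{+}=\emptyset$) is a genuine caveat the paper overlooks; the statement should indeed be read with $2\le n\le 9$ for the positive region.
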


\begin{proposition}\label{p8}
If $ n\geq 10$,  then if  $s\in I_{n,+}$ (eq. \eqref{eq1016_1}), the
function $\Xi_n(s; a_1,\ldots, a_n)$ is positive for all $
(a_1,\ldots, a_n)\in (\R^+)^n$; whereas if $s\in I_{n,-}$ (eq.
\eqref{eq1017_1}), there exists a nonempty open region
$\Omega_{s,n}^+$ of $(a_1,\ldots, a_n)\in(\R^+)^n$ where $\Xi_{n}(s;
a_1,\ldots, a_n)$ is positive and a nonempty open region
$\Omega_{s,n}^-$ of $(a_1,\ldots, a_n)\in(\R^+)^n$ where $\Xi_{n}(s;
a_1,\ldots, a_n)$ is negative. In the latter case, the ray
$a_1=\dots=a_n$ lies inside the region $\Omega_{s,n}^-$.
\end{proposition}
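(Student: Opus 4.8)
The plan is to obtain Proposition~\ref{p8} by assembling the results already in hand, using two soft auxiliary facts: that the \emph{sign} of $\Xi_n(s;a_1,\ldots,a_n)$ is unchanged under the simultaneous rescaling $a_i\mapsto\lambda a_i$, and that $\Xi_n(s;a_1,\ldots,a_n)$ depends jointly continuously on $(a_1,\ldots,a_n)$. The scaling fact is immediate from the homogeneity relation $Z_n(s;\lambda a_1,\ldots,\lambda a_n)=\lambda^{-2s}Z_n(s;a_1,\ldots,a_n)$ noted just after \eqref{eq830_9}, which together with \eqref{eq109_10} gives $\Xi_n(s;\lambda a_1,\ldots,\lambda a_n)=\lambda^{\frac{n}{2}-2s}\,\Xi_n(s;a_1,\ldots,a_n)$, and the factor $\lambda^{\frac{n}{2}-2s}$ is strictly positive for $\lambda>0$. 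The continuity can be read off from \eqref{eq1011_1}, whose two integrals converge locally uniformly in $(s,a_1,\ldots,a_n)$.

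First I would handle the case $s\in I_{n,+}$. Given an arbitrary $(a_1,\ldots,a_n)\in(\R^+)^n$, put $\lambda=\bigl(\prod_{i=1}^n a_i\bigr)^{1/n}$ and $b_i=a_i/\lambda$, so that $\prod_{i=1}^n b_i=1$. By Theorem~\ref{th1}, $\Xi_n(s;b_1,\ldots,b_n)\ge\Xi_n(s;1,\ldots,1)=\Xi_n(s)$, and the right-hand side is strictly positive precisely because $s\in I_{n,+}$ (definition \eqref{eq1016_1}). Multiplying through by $\lambda^{\frac{n}{2}-2s}>0$ and invoking the homogeneity relation gives $\Xi_n(s;a_1,\ldots,a_n)>0$. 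Since $(a_1,\ldots,a_n)$ was arbitrary, this is the first assertion.

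Next I would handle the case $s\in I_{n,-}$. Now $\Xi_n(s;1,\ldots,1)=\Xi_n(s)<0$, so the set $\Omega_{s,n}^-:=\{(a_1,\ldots,a_n)\in(\R^+)^n:\Xi_n(s;a_1,\ldots,a_n)<0\}$ is a nonempty open neighbourhood of $(1,\ldots,1)$ by continuity, and by the homogeneity relation it contains the entire ray $a_1=\cdots=a_n$. For the positivity region, Proposition~\ref{p4} applies (here $n\ge 10\ge 2$ and $s\in(0,n/2)$): fixing all but one of the $a_i$ and letting the remaining one tend to $+\infty$ produces points at which $\Xi_n(s;a_1,\ldots,a_n)>0$. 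Hence $\Omega_{s,n}^+:=\{(a_1,\ldots,a_n)\in(\R^+)^n:\Xi_n(s;a_1,\ldots,a_n)>0\}$ is nonempty, and it is open by continuity. This completes the argument.

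Since each step quotes an earlier result essentially verbatim, I do not expect any genuine obstacle; the only points that need care are the verification of the sign-invariance identity and the (routine) appeal to joint continuity of $\Xi_n$. All the real work has already been done in Theorem~\ref{th1}, in Proposition~\ref{p4}, and in the determination of the sets $I_{n,\pm}$ via Proposition~\ref{p5} and Lemma~\ref{l1}.
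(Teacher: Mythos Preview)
Your proof is correct and follows essentially the same approach as the paper. The paper itself does not give a separate proof of Proposition~\ref{p8}: it simply states that the result follows from Theorem~\ref{th1} together with Propositions~\ref{p4}, \ref{p5}, and \ref{p2}, and you have spelled out precisely how those ingredients (plus the homogeneity relation and continuity) combine to give the conclusion.
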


  Propositions \ref{p7} and \ref{p8} are important for
our study of symmetry breaking of interacting fractional
Klein--Gordon field in \cite{e10}. In the following section, it will
be shown  that the region $\Omega_{s,n}^-$ is connected. Moreover,
under the map $(a_1,\ldots, a_n)\in (\R^+)^n\mapsto (\log
a_1,\ldots, \log a_n)\in \R^n$, the projection of its image to the
plane $x_1+\ldots+x_n=$ constant is a convex set.
\section{Convexity of the Epstein zeta function}

In this section, we   show that \begin{theorem}\label{th2}For any
fixed $s$, when $\sum_{i=1}^n c_i =c$ is fixed,  the function
\begin{align}\label{eq1017_2}(c_1,\ldots, c_n) \mapsto &\Xi_n\left(s; e^{c_1},\ldots,
e^{c_n}\right)\nonumber\\&=-\frac{e^{\frac{c}{2}}}{s}-\frac{e^{-\frac{c}{2}}}{\frac{n}{2}-s}
+e^{\frac{c}{2}}\int_1^{\infty}t^{s-1}\left(\prod_{i=1}^n
\vartheta(e^{2c_i}t)-1\right)dt\\
&+e^{-\frac{c}{2}}\int_1^{\infty}t^{\frac{n}{2}-s-1}\left(\prod_{i=1}^n
\vartheta(e^{-2c_i}t)-1\right)dt,\nonumber\end{align}regarded as a
function of any $(n-1)$ of the variables  $\{c_1,\ldots,c_n)$, is a
convex function.\end{theorem}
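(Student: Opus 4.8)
The plan is to reduce everything to Proposition~\ref{p1} via a general lemma about products, and then to assemble the four terms of \eqref{eq1017_2} using operations that preserve convexity. First I would isolate and prove the combinatorial core: if $f\colon\R\to\R^+$ has $\log f$ convex, then $\mathcal F_n(x_1,\dots,x_n)=\prod_{i=1}^n f(x_i)$ is convex on $\R^n$. For this, write $\log\mathcal F_n(x)=\sum_{i=1}^n\log f(x_i)$; each summand is a convex function of the single coordinate $x_i$, hence of $x\in\R^n$, so $\log\mathcal F_n$ is convex as a sum of convex functions. Since $\mathcal F_n=\exp(\log\mathcal F_n)$ and $\exp$ is convex and nondecreasing on $\R$, the composition $\mathcal F_n$ is convex. (Equivalently, one may compute the Hessian of $\mathcal F_n$ and note it equals $\mathcal F_n$ times $D+vv^t$, where $v_i=f'(x_i)/f(x_i)$ and $D$ is the diagonal matrix with entries $(\log f)''(x_i)\ge 0$, hence positive semidefinite; this uses the twice differentiability of $f$.)

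Next I would feed Proposition~\ref{p1} into this lemma. For a fixed $t\ge 1$ set $f_t(u)=\vartheta(e^{2u}t)=\vartheta\bigl(e^{2u+\log t}\bigr)$; then $\log f_t(u)$ is the composition of the strictly convex function $v\mapsto\log\vartheta(e^v)$ of Proposition~\ref{p1} with the affine map $u\mapsto 2u+\log t$, hence convex. By the lemma, $(c_1,\dots,c_n)\mapsto\prod_{i=1}^n\vartheta(e^{2c_i}t)$ is convex on $\R^n$ for every fixed $t\ge1$; replacing $u$ by $-u$, which is still affine, gives the same for $(c_1,\dots,c_n)\mapsto\prod_{i=1}^n\vartheta(e^{-2c_i}t)$.

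Now I would assemble \eqref{eq1017_2} under the constraint $\sum_i c_i=c$, on which $e^{c/2}$ and $e^{-c/2}$ are constants. For each fixed $t$, subtracting the constant $1$ and multiplying by the positive weight $t^{s-1}$ (resp.\ $t^{n/2-s-1}$) preserves convexity in $(c_1,\dots,c_n)$; integrating over $t\in[1,\infty)$ then preserves convexity, since for a family $G_t$ of convex functions with $\int_1^\infty G_t\,dt$ finite one has $\int_1^\infty G_t(\lambda x+(1-\lambda)y)\,dt\le\lambda\int_1^\infty G_t(x)\,dt+(1-\lambda)\int_1^\infty G_t(y)\,dt$ termwise. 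Finiteness is not an issue: as $t\to\infty$ each $\vartheta(e^{\pm2c_i}t)-1$ decays exponentially, so both integrals in \eqref{eq1017_2} converge, exactly as in \eqref{eq1011_1}. Multiplying these convex integral terms by the positive constants $e^{\pm c/2}$, adding the constants $-e^{c/2}/s$ and $-e^{-c/2}/(n/2-s)$, and summing, I conclude that $(c_1,\dots,c_n)\mapsto\Xi_n(s;e^{c_1},\dots,e^{c_n})$ is convex on the hyperplane $\{\sum_i c_i=c\}$. Finally, regarding it as a function of any $n-1$ of the $c_i$ amounts, by the symmetry of $\Xi_n$ in $a_1,\dots,a_n$, to precomposing with the affine parametrization $(c_1,\dots,c_{n-1})\mapsto(c_1,\dots,c_{n-1},\,c-\sum_{i=1}^{n-1}c_i)$ of that hyperplane, which again preserves convexity; this yields Theorem~\ref{th2}.

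I expect the only genuinely substantive point to be the product lemma together with its input Proposition~\ref{p1}; everything else is a matter of checking that convexity survives standard operations — affine substitution, multiplication by a positive constant, addition of constants, and integration against a positive measure — the last of which needs only the already-established convergence of the integrals in \eqref{eq1017_2}. A minor point worth noting is that the statement concerns $\Xi_n$ rather than $Z_n$: passing to $Z_n$ would require $\pi^{-s}\Gamma(s)$ to be positive, so for general real $s$ the clean convexity assertion is the one for $\Xi_n$ as formulated.
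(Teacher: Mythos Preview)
Your argument is correct, and in fact streamlines the paper's proof at the one genuinely nontrivial step. Both you and the paper reduce Theorem~\ref{th2} to the lemma ``$\log f$ convex $\Rightarrow$ $\mathcal F_n(x)=\prod_i f(x_i)$ convex'', feeding in Proposition~\ref{p1}, and both assemble the four terms of \eqref{eq1017_2} by the standard closure properties of convexity under affine precomposition, positive scaling, addition of constants, and integration. The difference is in how the product lemma is proved. The paper computes the Hessian of $\mathcal F_n$, observes it is $\mathcal F_n$ times the matrix $H_n$ of \eqref{eq1017_12}, and then proves a separate identity (Proposition~\ref{p10}) evaluating $\det J_n$ explicitly as $\prod_i(z_i-w_i^2)+\sum_i w_i^2\prod_{j\neq i}(z_j-w_j^2)$, whence nonnegativity of all leading principal minors follows from $(\log f)''\ge 0$. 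Your route bypasses this determinant computation entirely: since $\log\mathcal F_n=\sum_i\log f(x_i)$ is a sum of convex functions it is convex, and $\exp$ is increasing and convex, so $\mathcal F_n=\exp(\log\mathcal F_n)$ is convex; your parenthetical Hessian remark $H_{\mathcal F_n}=\mathcal F_n\,(D+vv^t)$ with $D\succeq 0$ gives the same conclusion in one line. What the paper's approach buys is the explicit closed form for $\det H_n$, which is of independent interest but unnecessary for the theorem; your approach is shorter and more conceptual.
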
 Recall that a function $f(x)$, $x\in
\R^n$ is convex if and only if for any $\lambda\in (0,1)$ and any
two points $x, y\in\R^n$,
\begin{align*}
f\left(\lambda x+(1-\lambda)y\right)\leq \lambda
f(x)+(1-\lambda)f(y).
\end{align*}If $f$ has continuous second partial derivatives, then
$f$ is convex if and only if the Hessian of $f$,
\begin{align*}
H_f(x_1,\ldots, x_n)=\left(\frac{\pa^2f}{\pa x_i\pa
x_j}\right)_{\substack{1\leq i\leq n\\1\leq j\leq n}},
\end{align*} is a positive semi-definite matrix. One way to prove that a Hermitian matrix is positive semi-definite is
to use the Sylvester criterion, which says that an $n\times n$
Hermitian matrix is positive semi-definite if and only if all its
leading principal minors are nonnegative. For $1\leq i\leq n$, the
$i^{\text{th}}$ leading principal minor of an $n\times n$ matrix $A$
is the determinant of the $i\times i$ square matrix on the upper
left corner of $A$.

 In the
formula \eqref{eq1017_2},  the  dependence   on $(c_1,\ldots, c_n)$
only comes from the terms $\prod_{i=1}^n\vartheta(e^{2c_i}t)$ and
$\prod_{i=1}^n\vartheta(e^{-2c_i}t)$. Notice that if a function
$f(x_1,\ldots, x_n)$ is convex, then under any affine change of
coordinates $$x_i=\sum_{j=1}^n l_{ij} y_j+d_i, \hspace{1cm} 1\leq
i\leq n,$$ the function $$\hat f(y_1,\ldots,
y_n)=f\left(\sum_{j=1}^n l_{1j}y_j+d_1,\ldots,\sum_{j=1}^n
l_{nj}y_j+d_n\right)$$ is also convex. Moreover, the projection of
$f$ to any hyperplane $\sum_{i=1}^n b_i x_i=b$ is also convex.
Therefore, if the function
\begin{align*}
\Theta_n(x_1,\ldots, x_n)=\prod_{i=1}^n
\vartheta\left(e^{x_i}\right)=\prod_{i=1}^n \Theta_1(x_i),
\hspace{1cm}(x_1,\ldots, x_n)\in\R^n
\end{align*}is convex, then $\prod_{i=1}^n\vartheta(e^{2c_i}t)$ and
$\prod_{i=1}^n\vartheta(e^{-2c_i}t)$, regarded as  functions of
$\{c_1,\ldots,c_{n-1}\}$ (so that $c_n =c-\sum_{i=1}^{n-1}c_i$),
 are convex. This will show that
$\Xi_n(s; e^{c_1},\ldots, e^{c_n})$ is a convex function of
$(c_1,\ldots,c_{n-1})$. Therefore to prove Theorem \ref{th2}, we
only need to show that for all $n\geq 1$, the function
$\Theta_n(x_1,\ldots, x_n)$ is convex.
 For $n= 1$, we use the following fact:
\begin{lemma}\label{l10}
If $g:\R\rightarrow \R$ is a  convex function, then the function $f$
defined by $$f(x)=e^{g(x)},$$is also convex.
\end{lemma}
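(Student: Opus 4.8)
The plan is to derive the lemma straight from the chord definition of convexity, without invoking second derivatives, so that no differentiability of $g$ is required. The argument factors through two elementary facts: the exponential is increasing, and the exponential is itself convex.

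Fix $\lambda\in(0,1)$ and $x,y\in\R$. First, convexity of $g$ gives
$$g\bigl(\lambda x+(1-\lambda)y\bigr)\;\le\;\lambda g(x)+(1-\lambda)g(y),$$
and since $t\mapsto e^{t}$ is (strictly) increasing, applying it to both sides yields
$$f\bigl(\lambda x+(1-\lambda)y\bigr)=e^{g(\lambda x+(1-\lambda)y)}\;\le\;e^{\lambda g(x)+(1-\lambda)g(y)}.$$
Second, $t\mapsto e^{t}$ is convex on $\R$ (its second derivative $e^{t}$ is positive), so with $a=g(x)$ and $b=g(y)$ we have $e^{\lambda a+(1-\lambda)b}\le \lambda e^{a}+(1-\lambda)e^{b}=\lambda f(x)+(1-\lambda)f(y)$. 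Chaining the two estimates gives $f(\lambda x+(1-\lambda)y)\le\lambda f(x)+(1-\lambda)f(y)$, which is exactly the convexity of $f$.

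I would also record the strict version, since that is what is needed downstream: if $g$ is \emph{strictly} convex then for $x\neq y$ the first inequality above is strict, and hence so is the conclusion, so $f=e^{g}$ is strictly convex. Applied with $g(x)=\log\vartheta(e^{x})$, which is strictly convex by Proposition \ref{p1}, this shows $\Theta_1(x)=\vartheta(e^{x})=\exp\bigl(\log\vartheta(e^{x})\bigr)$ is strictly convex, which is the $n=1$ base case for the convexity of $\Theta_n(x_1,\ldots,x_n)=\prod_{i=1}^n\Theta_1(x_i)$ feeding into Theorem \ref{th2}.

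There is no real obstacle here: the only point deserving a word of care is that $g$ is not assumed differentiable, so one must not reach for the Hessian criterion and should instead argue through the defining inequality, as above.
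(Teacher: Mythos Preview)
Your proof is correct and follows essentially the same idea as the paper, which simply invokes that $e^x$ is convex and appeals to a composition principle. Your argument is in fact more careful: you explicitly use both the \emph{monotonicity} and the convexity of $e^t$, whereas the paper's one-line justification (``composition of convex functions are convex'') is false as stated without the monotonicity hypothesis on the outer function. Your added remark on the strict version and its application to $\Theta_1$ is accurate and matches how the lemma is used immediately afterward.
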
\begin{proof}This is a simple consequence of the fact that
the exponential function $e^x$ is a convex function, and composition
of convex functions are convex.
\end{proof}

Using this lemma, we can immediately conclude from Proposition
\ref{p1} that
\begin{lemma}\label{l3}
The function $$\Theta_1(x)=\vartheta\left(e^x\right),
\hspace{1cm}x\in\R$$ is convex.
\end{lemma}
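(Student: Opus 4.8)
The plan is to obtain the convexity of $\Theta_1$ as an immediate consequence of Proposition \ref{p1} together with Lemma \ref{l10}. First I would introduce the auxiliary function $g(x)=\log\vartheta(e^x)$; by Proposition \ref{p1} this function is strictly convex on $\R$, hence in particular convex. Next I would apply Lemma \ref{l10} to this $g$: since $g$ is convex, the function $x\mapsto e^{g(x)}$ is convex. Finally I would observe that
\begin{align*}
e^{g(x)}=e^{\log\vartheta(e^x)}=\vartheta(e^x)=\Theta_1(x),
\end{align*}
so $\Theta_1$ is convex, which is exactly the assertion of the lemma.

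The main point is that there is no genuine obstacle here: the analytic heart of the matter — the positivity of the second derivative of $\log\vartheta(e^u)$, established through the estimate on $h(v)$ in the proof of Proposition \ref{p1} — has already been carried out, and Lemma \ref{l10} is the elementary fact that the exponential of a convex function is convex (a composition of convex maps whose outer map is nondecreasing). The only place that requires a moment's care is the bookkeeping: one must confirm that exponentiating the logarithm of $\vartheta(e^x)$ returns $\Theta_1(x)$ exactly, with no reparametrization, which is clear from the display above.

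As an aside, one could instead give a self-contained argument by differentiating $\Theta_1$ twice and expressing $\Theta_1''(x)$ in terms of $\vartheta,\vartheta',\vartheta''$ evaluated at $e^x$; but this would merely reconstruct the computation underlying Proposition \ref{p1}, so routing the proof through the two preceding results is the more economical path, and I would present it that way.
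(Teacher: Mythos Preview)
Your proposal is correct and matches the paper's own argument exactly: the paper also deduces Lemma~\ref{l3} immediately from Proposition~\ref{p1} via Lemma~\ref{l10}, noting that $\Theta_1(x)=e^{\log\vartheta(e^x)}$.
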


For $n\geq 2$, in view of the definition of $\Theta_n$, we are led
to the study of a general function of the type
\begin{align}\label{eq1017_11}
\mathcal{F}_n(x_1, \ldots, x_n)=\prod_{i=1}^n f(x_i),
\end{align}where $f:\R\rightarrow \R^+$ is a positive function of one variable.
For such functions, its Hessian is given by
\begin{align*}
H_{\mathcal{F}_n}(x_1,\ldots,x_n) =\mathcal{F}_n(x_1,\ldots,
x_n)^n\begin{pmatrix} \frac{f^{\prime\prime}(x_1)}{f(x_1)} &
\frac{f'(x_1)}{f(x_1)}\frac{f'(x_2)}{f(x_2)}&\ldots &
\frac{f'(x_1)}{f(x_1)}\frac{f'(x_n)}{f(x_n)}\\
\frac{f'(x_2)}{f(x_2)}\frac{f'(x_1)}{f(x_1)} &
\frac{f^{\prime\prime}(x_2)}{f(x_2)}&\ldots &
\frac{f'(x_2)}{f(x_2)}\frac{f'(x_n)}{f(x_n)}\\
\vdots &\vdots & & \vdots\\
\frac{f'(x_n)}{f(x_n)}\frac{f'(x_1)}{f(x_1)}&\frac{f'(x_n)}{f(x_n)}\frac{f'(x_2)}{f(x_2)}
&\ldots &\frac{f^{\prime\prime}(x_n)}{f(x_n)}\end{pmatrix}.
\end{align*}By the Sylvester criterion, to show that $\mathcal{F}_n$
is convex, it is sufficient to show that all the leading principal
minors of
\begin{align}\label{eq1017_12}
H_n(x_1,\ldots, x_n)=\begin{pmatrix}
\frac{f^{\prime\prime}(x_1)}{f(x_1)} &
\frac{f'(x_1)}{f(x_1)}\frac{f'(x_2)}{f(x_2)}&\ldots &
\frac{f'(x_1)}{f(x_1)}\frac{f'(x_n)}{f(x_n)}\\
\frac{f'(x_2)}{f(x_2)}\frac{f'(x_1)}{f(x_1)} &
\frac{f^{\prime\prime}(x_2)}{f(x_2)}&\ldots &
\frac{f'(x_2)}{f(x_2)}\frac{f'(x_n)}{f(x_n)}\\
\vdots &\vdots & & \vdots\\
\frac{f'(x_n)}{f(x_n)}\frac{f'(x_1)}{f(x_1)}&\frac{f'(x_n)}{f(x_n)}\frac{f'(x_2)}{f(x_2)}
&\ldots &\frac{f^{\prime\prime}(x_n)}{f(x_n)}\end{pmatrix}
\end{align}are nonnegative. Notice that the $i^{\text{th}}$ leading
principal minor of $H_n(x_1,\ldots,x_n)$ is the determinant of
$H_i(x_1,\ldots, x_i)$. Therefore, one only have to show that $\det
H_n\geq 0$ for all $n\geq 1$. For $f(x)=\vartheta(e^x)$, the $n=1$
case is already proved in Lemma \ref{l3}. For $n\geq 2$, to further
simplify our problem, notice that for fixed $(x_1,\ldots,x_n)\in
\R^n$, the matrix $H_n$ has the form
\begin{align*}
J_n =\begin{pmatrix} z_1& w_1 w_2 & \ldots & w_1 w_n\\
w_2w_1 & z_2 &\ldots & w_2w_n\\
\vdots &\vdots & & \vdots\\
w_n w_1 & w_n w_2 & \ldots & z_n\end{pmatrix}
\end{align*}for some $2n$ real numbers $z_1,\ldots, z_n$ and
$w_1,\ldots, w_n$. When $n=2$, the determinant of $J_2$ is
\begin{align}\label{eq1017_6}
\det J_2= z_1 z_2- w_1^2
w_2^2=(z_1-w_1^2)(z_2-w_2^2)+w_1^2(z_1-w_1^2)+w_2^2(z_2-w_2^2).
\end{align}The reason for expressing the determinant in this form is
inspired by the fact that for
$z_i=\frac{f^{\prime\prime}(x_i)}{f(x_i)}$,
$w_i=\frac{f'(x_i)}{f(x_i)}$,
\begin{align}\label{eq1017_13}z_i-w_i^2=\frac{d^2}{dx_i^2}\log f(x_i).\end{align}Eq.
\eqref{eq1017_6} implies that if $z_i-w_i^2\geq 0$, then $\det
J_2\geq 0$. This gives
\begin{lemma}\label{l4}
Let $f:\R\rightarrow \R^+$ be a twice continuously differentiable
positive function. If  $\log f(x)$ is a convex function, then the
function $\mathcal{F}_2(x_1, x_2)=f(x_1)f(x_2)$ is convex.
\end{lemma}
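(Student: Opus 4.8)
The plan is to run the Sylvester-criterion argument already set up in the discussion preceding the lemma. Since $f$ is twice continuously differentiable and positive, $\mathcal{F}_2(x_1,x_2)=f(x_1)f(x_2)$ has continuous second partial derivatives, so it is convex exactly when its Hessian $H_{\mathcal{F}_2}(x_1,x_2)$ is positive semi-definite at every $(x_1,x_2)\in\R^2$. As noted above, $H_{\mathcal{F}_2}(x_1,x_2)$ is a positive multiple of the matrix $H_2(x_1,x_2)=J_2$ with entries $z_i=f^{\prime\prime}(x_i)/f(x_i)$ on the diagonal and $w_iw_j$, $w_i=f'(x_i)/f(x_i)$, off the diagonal. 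By Sylvester's criterion it therefore suffices to check that the two leading principal minors of $J_2$ are nonnegative.

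First I would dispose of the $1\times 1$ minor, which is $z_1=f^{\prime\prime}(x_1)/f(x_1)$. Convexity of $\log f$ means $\frac{d^2}{dx^2}\log f(x)=\dfrac{f^{\prime\prime}(x)f(x)-f'(x)^2}{f(x)^2}\ge 0$ for all $x$, hence $f^{\prime\prime}(x)f(x)\ge f'(x)^2\ge 0$; since $f>0$ this forces $f^{\prime\prime}(x)\ge 0$, so $z_1\ge 0$ (and likewise $z_2\ge 0$, although only $z_1$ is needed for the lemma).

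For the $2\times 2$ minor, which is $\det J_2$, I would invoke the identity \eqref{eq1017_6} together with \eqref{eq1017_13}. Writing $z_i-w_i^2=\frac{d^2}{dx_i^2}\log f(x_i)$, convexity of $\log f$ makes both $z_1-w_1^2\ge 0$ and $z_2-w_2^2\ge 0$, so each of the three summands $(z_1-w_1^2)(z_2-w_2^2)$, $w_1^2(z_1-w_1^2)$, $w_2^2(z_2-w_2^2)$ in \eqref{eq1017_6} is nonnegative; hence $\det J_2\ge 0$. With both leading principal minors nonnegative, $H_{\mathcal{F}_2}$ is positive semi-definite everywhere, so $\mathcal{F}_2$ is convex.

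Every step here is an elementary inequality, so I do not expect any genuine obstacle. The one point deserving a line of care is that the first principal minor $z_1$ is nonnegative — this quantity is not one of the $z_i-w_i^2$ appearing in \eqref{eq1017_6}, but it follows at once from convexity of $\log f$ and positivity of $f$. The only real ingredient is the algebraic rearrangement \eqref{eq1017_6} of $\det J_2$, which is already recorded in the excerpt; the rest is bookkeeping.
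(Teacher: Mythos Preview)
Your argument is correct and follows the same Sylvester-criterion route as the paper. The only cosmetic difference is how you verify the first minor $z_1=f''(x_1)/f(x_1)\ge 0$: the paper invokes Lemma~\ref{l10} (log-convex implies convex), whereas you read it off directly from $f''f\ge (f')^2\ge 0$, which is arguably cleaner.
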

\begin{proof}
Recall that $\mathcal{F}_2(x_1, x_2)$ is convex if and only if
$$\det H_1(x_1)=\frac{f^{\prime\prime}(x_1)}{f(x_1)}\geq 0$$ and
$$\det H_2(x_1, x_2)=\det
J_2\left[z_1=\frac{f^{\prime\prime}(x_1)}{f(x_1)},
z_2=\frac{f^{\prime\prime}(x_2)}{f(x_2)}; w_1=\frac{f^{\prime
}(x_1)}{f(x_1)}, w_2=\frac{f^{\prime }(x_2)}{f(x_2)}\right]\geq 0.$$
As discussed above, if $\log f(x)$ is convex, then $\det H_2(x_1,
x_2)\geq 0$. On the other hand, Lemma \ref{l10} implies that $f(x)$
is also convex. Therefore, $\det H_1(x_1)\geq 0$. This completes the
proof.
\end{proof}

For the case under consideration, $f(x)=\vartheta(e^x)$, and it is
shown in Proposition
 \ref{p1} that  $\log \vartheta(e^x)$ is a
 convex function. Therefore, Lemma \ref{l4} implies that
 \begin{lemma}\label{l5}
The function $$\Theta_2(x_1,
x_2)=\vartheta(e^{x_1})\vartheta(e^{x_2}), \hspace{1cm}(x_1, x_2)\in
\R^2$$ is convex.
 \end{lemma}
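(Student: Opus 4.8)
The plan is to deduce the statement as an immediate instance of Lemma~\ref{l4}. First I would record that $f(x)=\vartheta(e^x)$ satisfies the hypotheses of that lemma: it is a positive, twice continuously differentiable (indeed real-analytic) function on $\R$. Positivity is clear from the series $\vartheta(t)=\sum_{k\in\Z}e^{-\pi t k^2}>0$ for $t>0$, and smoothness follows since $t\mapsto\vartheta(t)$ is analytic on $\R^+$ (the defining series and all its $t$-derivatives converge locally uniformly) and $x\mapsto e^x$ is smooth with image $\R^+$. Next I would invoke Proposition~\ref{p1}, which asserts precisely that $\log f(x)=\log\vartheta(e^x)$ is (strictly) convex on $\R$. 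With these two facts in hand, Lemma~\ref{l4} applies verbatim and yields that $\mathcal{F}_2(x_1,x_2)=f(x_1)f(x_2)=\vartheta(e^{x_1})\vartheta(e^{x_2})=\Theta_2(x_1,x_2)$ is convex, which is the claim.

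Alternatively, and essentially equivalently, one can argue by hand through the Hessian criterion: by the Sylvester criterion it suffices to show $\det H_1(x_1)=f''(x_1)/f(x_1)\ge 0$ and $\det H_2(x_1,x_2)\ge 0$, where $H_1,H_2$ are the matrices in \eqref{eq1017_12}. For the first, Lemma~\ref{l10} combined with Proposition~\ref{p1} shows $f$ itself is convex, so $f''\ge 0$ and hence $\det H_1\ge 0$. For the second, one uses the algebraic identity \eqref{eq1017_6} together with \eqref{eq1017_13}: writing $z_i=f''(x_i)/f(x_i)$ and $w_i=f'(x_i)/f(x_i)$, convexity of $\log f$ gives $z_i-w_i^2=\frac{d^2}{dx_i^2}\log f(x_i)\ge 0$, so each term on the right of \eqref{eq1017_6} is nonnegative and $\det H_2=\det J_2\ge 0$.

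Since all the genuine content here — the convexity of $\log\vartheta(e^x)$ — has already been established in Proposition~\ref{p1}, there is no real obstacle; the only point requiring a moment's care is confirming the regularity of $\vartheta(e^x)$ so that the Hessian criterion is legitimately applicable, and that is routine. I would also remark in passing that, because Proposition~\ref{p1} in fact gives \emph{strict} convexity of $\log\vartheta(e^x)$, the same argument shows $\Theta_2$ is strictly convex along every line not parallel to a coordinate axis, although plain convexity is all that Theorem~\ref{th2} requires.
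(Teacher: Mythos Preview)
Your proof is correct and follows exactly the paper's route: the paper likewise deduces Lemma~\ref{l5} directly from Lemma~\ref{l4} by taking $f(x)=\vartheta(e^x)$ and invoking Proposition~\ref{p1} for the convexity of $\log f$. Your alternative Hessian argument simply unpacks the proof of Lemma~\ref{l4} itself, so it is the same reasoning; one minor quibble is that your closing remark understates things, since strict convexity of $\log\vartheta(e^x)$ actually makes the Hessian of $\Theta_2$ positive definite everywhere, hence $\Theta_2$ is strictly convex along \emph{every} line, including the coordinate directions.
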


Returning to the general case. First we claim the following, which
is a generalization of \eqref{eq1017_6}.
\begin{proposition}\label{p10}
For $n\geq 2$,
\begin{align}\label{eq1017_14}
\det J_n =\det \begin{pmatrix} z_1& w_1 w_2 & \ldots & w_1 w_n\\
w_2w_1 & z_2 &\ldots & w_2w_n\\
\vdots &\vdots & & \vdots\\
w_n w_1 & w_n w_2 & \ldots & z_n\end{pmatrix}=\prod_{i=1}^n
(z_i-w_i^2)+\sum_{i=1}^n w_i^2\left[\prod_{j\neq
i}\left(z_j-w_j^2\right)\right].
\end{align}Therefore, if $z_i\geq w_i^2$ for all $1\leq i\leq n$, then $\det
J_n\geq 0$.
\end{proposition}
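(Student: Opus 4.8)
The plan is to recognize $J_n$ as a rank-one perturbation of a diagonal matrix and then expand the determinant by multilinearity in the columns. Set $\tilde D=\mathrm{diag}(z_1-w_1^2,\ldots,z_n-w_n^2)$ and let $w=(w_1,\ldots,w_n)^t$. Since the $(i,j)$ entry of $ww^t$ is $w_iw_j$ — which equals the off-diagonal entries of $J_n$ and equals $w_i^2$ on the diagonal — we have $J_n=\tilde D+ww^t$, and in particular the $i$-th column of $J_n$ is $(z_i-w_i^2)\,e_i+w_i\,w$, where $e_i$ is the $i$-th standard basis vector.

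First I would expand $\det J_n$ by multilinearity of the determinant in its $n$ columns. This produces a sum over all subsets $S\subseteq\{1,\ldots,n\}$, where the term attached to $S$ is the determinant of the matrix whose $i$-th column is $w_i\,w$ when $i\in S$ and $(z_i-w_i^2)\,e_i$ when $i\notin S$. If $|S|\geq 2$, at least two columns of this matrix are scalar multiples of the same vector $w$, so its determinant vanishes; hence only $S=\emptyset$ and the singletons $S=\{i\}$ contribute.

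Next I would evaluate the surviving terms. The term $S=\emptyset$ is $\det\tilde D=\prod_{i=1}^n(z_i-w_i^2)$. For $S=\{i\}$, the matrix has column $i$ equal to $w_i\,w$ and column $j$ equal to $(z_j-w_j^2)\,e_j$ for $j\neq i$; expanding along the columns $j\neq i$ leaves only the diagonal contribution $w_i\cdot w_i=w_i^2$ from column $i$, so this term is $w_i^2\prod_{j\neq i}(z_j-w_j^2)$. Summing gives exactly \eqref{eq1017_14}. Finally, if $z_i\geq w_i^2$ for every $i$, each summand on the right-hand side is a product of nonnegative reals, so $\det J_n\geq 0$.

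There is essentially no hard step here; the only point requiring care is to avoid dividing by $z_i-w_i^2$, as one would if invoking the matrix determinant lemma $\det(\tilde D+ww^t)=\det\tilde D\,(1+w^t\tilde D^{-1}w)$ directly — the column-expansion argument above is preferable because it is valid whether or not $\tilde D$ is singular. Alternatively, one could establish \eqref{eq1017_14} first under the hypothesis that all $z_i-w_i^2$ are nonzero via that lemma, and then extend it to the general case by noting that both sides are polynomials in $z_1,\ldots,z_n,w_1,\ldots,w_n$ agreeing on a dense subset of $\mathbb{R}^{2n}$.
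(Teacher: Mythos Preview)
Your argument is correct and clean. You recognize $J_n=\tilde D+ww^t$ as a rank-one perturbation of the diagonal matrix $\tilde D=\mathrm{diag}(z_i-w_i^2)$ and then expand the determinant column by column via multilinearity; the collapse to $|S|\leq 1$ is exactly the content of the matrix determinant lemma, but done without division so that no nondegeneracy hypothesis is needed.

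The paper proceeds quite differently. It first factors $w_i$ out of the $i$-th row and out of the $i$-th column, reducing to the determinant $\mathcal{J}_n(\alpha_1,\ldots,\alpha_n)$ of the matrix with $\alpha_i=z_i/w_i^2$ on the diagonal and $1$'s off the diagonal; then a row operation (subtracting each row from the previous one) yields a bidiagonal-type structure and the recursion
\[
\mathcal{J}_n(\alpha_1,\ldots,\alpha_n)=(\alpha_1-1)\,\mathcal{J}_{n-1}(\alpha_2,\ldots,\alpha_n)+(-1)^{n-1}\prod_{i=2}^{n}(1-\alpha_i),
\]
from which the closed form follows by induction. Both approaches arrive at the same polynomial identity; your method is more structural and sidesteps the implicit assumption $w_i\neq 0$ that the paper's factoring step requires (an assumption harmless by continuity, but not mentioned there), while the paper's row-reduction is perhaps more elementary in that it uses nothing beyond cofactor expansion and induction.
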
We would like to discuss the consequence of this
proposition first, and give the proof later. From the above
discussion, the function $\mathcal{F}_n$ of the form
\eqref{eq1017_11} is convex if and only if all the determinants of
$H_i$ (eq. \eqref{eq1017_12}), $1\leq i\leq n$ are nonnegative. Now
$H_n$ is a matrix of the form $J_n$, with $z_i -w_i^2$ given by
\eqref{eq1017_13}. Therefore, we conclude from Proposition \ref{p10}
that
\begin{proposition}
Let $f:\R\rightarrow \R^+$ be a twice continuously differentiable
positive function. If  $\log f(x)$ is a  convex function, then the
function $$\mathcal{F}_n(x_1,\ldots, x_n)=\prod_{i=1}^nf(x_i)$$ is
convex.
\end{proposition}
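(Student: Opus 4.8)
The plan is to reduce the claim entirely to Proposition~\ref{p10}. The logical chain is: $\mathcal{F}_n$ has continuous second partials (since $f$ is twice continuously differentiable and positive), so by the Hessian characterization of convexity it suffices to prove that $H_{\mathcal{F}_n}(x_1,\ldots,x_n)$ is positive semi-definite at every point $(x_1,\ldots,x_n)\in\R^n$. Since $\mathcal{F}_n(x_1,\ldots,x_n)>0$, this matrix is a positive multiple of the matrix $H_n(x_1,\ldots,x_n)$ displayed in \eqref{eq1017_12}, so it is enough to show $H_n$ is positive semi-definite. By the Sylvester criterion (stated in the text) this amounts to verifying that every leading principal minor of $H_n$ is nonnegative; and as already observed, the $i^{\text{th}}$ leading principal minor of $H_n(x_1,\ldots,x_n)$ is exactly $\det H_i(x_1,\ldots,x_i)$, a matrix of the same form with $n$ replaced by $i$. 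So the whole statement collapses to: $\det H_i\geq 0$ for every $1\leq i\leq n$.

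First I would dispose of the base case $i=1$: $\det H_1(x_1)=f^{\prime\prime}(x_1)/f(x_1)$, which is nonnegative because $\log f$ convex implies $f=e^{\log f}$ convex by Lemma~\ref{l10}, hence $f^{\prime\prime}\geq 0$, and $f>0$. For $2\leq i\leq n$, I would note that $H_i(x_1,\ldots,x_i)$ is precisely a matrix of the form $J_i$ in Proposition~\ref{p10}, with the identification $z_k=f^{\prime\prime}(x_k)/f(x_k)$ and $w_k=f'(x_k)/f(x_k)$, so that $z_k-w_k^2=\frac{d^2}{dx_k^2}\log f(x_k)$ by \eqref{eq1017_13}. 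Since $\log f$ is convex, $z_k-w_k^2\geq 0$ for all $k$, and then the final assertion of Proposition~\ref{p10} gives $\det J_i\geq 0$, i.e. $\det H_i\geq 0$.

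Putting these together: all leading principal minors of $H_n$ are nonnegative, so $H_n$ — and therefore $H_{\mathcal{F}_n}$ — is positive semi-definite everywhere, whence $\mathcal{F}_n$ is convex on $\R^n$. I do not expect any genuine obstacle here, since the real work (the determinant identity \eqref{eq1017_14} and its sign consequence) has been isolated into Proposition~\ref{p10}; the only mild subtlety is making sure the reduction from "$H_{\mathcal{F}_n}$ positive semi-definite" to "$\det H_i\geq 0$ for all $i$" is stated cleanly, in particular that applying Sylvester to the \emph{leading} principal minors (rather than all principal minors) is legitimate because the nested matrices $H_i$ are themselves of the form $J_i$ and the hypothesis $\log f$ convex is inherited unchanged. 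Once Proposition~\ref{p10} is proved this is immediate.
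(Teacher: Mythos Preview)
Your proposal is correct and follows exactly the paper's approach: the paper deduces this proposition directly from Proposition~\ref{p10} via the Sylvester criterion and the identification \eqref{eq1017_13}, after having already noted that the $i^{\text{th}}$ leading principal minor of $H_n$ is $\det H_i$ and that $H_i$ has the form $J_i$. Your caution about Sylvester for positive \emph{semi}-definiteness is well placed---strictly speaking one needs all principal minors, not just the leading ones---but here every principal minor of $H_n$ is again a matrix of type $J_k$ in a subset of the variables, so the same argument applies uniformly.
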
As in Lemma \ref{l5},   this
proposition implies that
\begin{proposition}
For any $n\geq 1$, the function $$\Theta_n(x_1,\ldots,
x_n)=\prod_{i=1}^n \vartheta(e^{x_i}), \hspace{1cm} (x_1,\ldots,
x_n)\in \R^n,$$ is a convex function.
\end{proposition}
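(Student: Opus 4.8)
The plan is to obtain this proposition as an immediate corollary of the preceding proposition, which states that $\mathcal{F}_n(x_1,\ldots,x_n)=\prod_{i=1}^n f(x_i)$ is convex whenever $f\colon\R\to\R^+$ is twice continuously differentiable and $\log f$ is convex. I would apply this with $f(x)=\vartheta(e^x)=\Theta_1(x)$. Three things must be checked: (i) $\vartheta(e^x)>0$, which is clear since $\vartheta(t)=\sum_{k\in\Z}e^{-\pi t k^2}$ is a sum of positive terms for every $t>0$; (ii) $x\mapsto\vartheta(e^x)$ is $C^2$ (in fact $C^\infty$), because the defining series and all its term-by-term derivatives converge locally uniformly on $\R$ thanks to the rapid decay of $e^{-\pi t k^2}$; and (iii) $\log\vartheta(e^x)$ is convex, which is precisely Proposition~\ref{p1}. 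Granting (i)--(iii), the preceding proposition yields the convexity of $\Theta_n$ at once.

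The only genuinely non-routine ingredient feeding this chain is Proposition~\ref{p10}, the determinant identity \eqref{eq1017_14} for $J_n$; I would prove it by multilinearity of the determinant in its columns. Writing $w=(w_1,\ldots,w_n)^t$, $d_i=z_i-w_i^2$, and letting $e_i$ denote the $i$-th standard basis vector, the $i$-th column of $J_n$ equals $w_i w+d_i e_i$. Expanding $\det J_n$ multilinearly over the $2^n$ choices of selecting, in each column, either the term $w_i w$ or the term $d_i e_i$, any selection that picks $w_i w$ in two or more columns gives a matrix with two columns proportional to $w$, hence determinant $0$; so only the empty selection (contributing $\prod_i d_i$) and the $n$ singleton selections (the $i$-th contributing $w_i^2\prod_{j\neq i}d_j$, after factoring $d_j$ out of the columns $j\neq i$ and evaluating the remaining "identity with column $i$ replaced by $w_i w$") survive. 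This is exactly \eqref{eq1017_14}, and it is nonnegative when each $d_i\ge 0$. Combining with the identity $z_i-w_i^2=\tfrac{d^2}{dx_i^2}\log f(x_i)$, the Sylvester criterion, and the $n=1$ case $\det H_1=f''/f\ge 0$ (which follows from Lemma~\ref{l10}, since $\log f$ convex forces $f$ convex), one gets the convexity of $\mathcal{F}_n$, hence the preceding proposition.

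In truth there is no obstacle particular to the final statement: once Proposition~\ref{p1} is established---and that is where the real analytic work of the paper lies, in the $\vartheta$-function estimates behind $h(v)>0$---the passage from $n=1$ to arbitrary $n$ is entirely the linear-algebra observation above. The only points deserving a word of care are the permissibility of term-by-term differentiation of $\vartheta(e^x)$ and the nonvanishing of $\vartheta$ on $(0,\infty)$, both of which are immediate from positivity and rapid decay of the summands.
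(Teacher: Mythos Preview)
Your proposal is correct and follows the paper's approach exactly: the proposition is obtained immediately from the preceding general result on $\mathcal{F}_n=\prod_i f(x_i)$ by taking $f(x)=\vartheta(e^x)$ and invoking Proposition~\ref{p1} for the convexity of $\log f$. The one genuine deviation is in how you justify the supporting determinant identity of Proposition~\ref{p10}: the paper factors out the $w_i$ from rows and columns to reduce to $\mathcal{J}_n(\alpha_1,\ldots,\alpha_n)$ and then proceeds by row reduction and induction on $n$, whereas you write each column as $w_i w + d_i e_i$ and expand by multilinearity, killing all terms with two or more columns proportional to $w$. Your route is slightly slicker (no induction, and it works directly without assuming any $w_i\neq 0$), while the paper's computation makes the recursive structure of $\mathcal{J}_n$ explicit; both lead to the same formula and the same nonnegativity conclusion.
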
As discussed in the beginning of this section, this
concludes the proof of Theorem \ref{th2}. Therefore, what is left is
the proof of Proposition \ref{p10}.

\begin{proof}By factoring out $w_i$ from row $i$ and column $i$, we have
\begin{align*}
\det J_n =\left[\prod_{i=1}^n w_i^2\right]^2
\mathcal{J}_n\left(\frac{z_1}{w_1^2},\frac{z_2}{w_2^2},\ldots,
\frac{z_n}{w_n^2}\right),
\end{align*}where \begin{align*}
\mathcal{J}_n(\alpha_1,\ldots, \alpha_n)=\det \begin{pmatrix}\alpha_1 & 1  &1 &\ldots & 1\\
1 &\alpha_2 & 1 &\ldots & 1\\
1 & 1 & \alpha_3 & \ldots & 1\\
\vdots & \vdots & \vdots & &\vdots\\
1 & 1 & 1 & \ldots & \alpha_n\end{pmatrix}.
\end{align*}By elementary row operation, we find that
\begin{align*}
\mathcal{J}_n(\alpha_1,\ldots, \alpha_n)=&\det \begin{pmatrix}\alpha_1 & 1  &1 &\ldots & 1\\
1 &\alpha_2 & 1 &\ldots & 1\\
1 & 1 & \alpha_3 & \ldots & 1\\
\vdots & \vdots & \vdots & &\vdots\\
1 & 1 & 1 & \ldots & \alpha_n\end{pmatrix}\\=&\det \begin{pmatrix}\alpha_1-1 & 1-\alpha_2  &0 &\ldots & 0\\
0 &\alpha_2-1 & 1-\alpha_3 &\ldots & 0\\
0 & 0 & \alpha_3-1 & \ldots & 0\\
\vdots & \vdots & \vdots & &\vdots\\
1 & 1 & 1 & \ldots & \alpha_n\end{pmatrix}\\
=&\left(\alpha_1-1\right)\mathcal{J}_{n-1}\left(\alpha_2,\ldots,
\alpha_n\right)+(-1)^{n-1}\prod_{i=2}^n(1-\alpha_i).
\end{align*}It then follows by induction that
\begin{align*}
\mathcal{J}_n(\alpha_1,\ldots, \alpha_n)=\prod_{i=1}^n (\alpha_i-1)
+\sum_{i=1}^n \prod_{j\neq i}(\alpha_j-1).
\end{align*}This proves the proposition.
\end{proof}

Next we discuss the consequences of Theorem \ref{th2}. Let $j$ be a
positive integer less than or equal to $n-1$, $A$  an $n\times j$
matrix such that the sum of every column vanishes, i.e.,
\begin{align}\label{eq1018_1}\sum_{i=1}^n A_{ij}=0, \end{align}and let $v$ be a vector in $\R^n$.
Define a $j$-variable function ${\Xi}_{n,A,v}(s; b_1,\ldots, b_j)$
by
\begin{align*}
{\Xi}_{n,A,v}(s; b_1,\ldots, b_j) = \Xi_n\left(s;
\exp\left[v_1+\sum_{l=1}^j A_{1l}b_l \right],\ldots,
\exp\left[v_n+\sum_{l=1}^j A_{nl}b_l \right]\right).
\end{align*}Notice that the condition \eqref{eq1018_1} implies
\begin{align*}
\prod_{i=1}^n \exp\left[v_i+\sum_{l=1}^j A_{il}b_l
\right]=\exp\left[\sum_{i=1}^n v_i\right]=\text{constant}.
\end{align*}Since affine change of coordinates does not affect the
convexity of a function, we conclude from Theorem \ref{th2} that
$\Xi_{n,A,v}(s;b_1,\ldots, b_j)$ is a convex function of
$(b_1,\ldots, b_j)$. As an example, if $v=0$ and $A$ is the $n\times
(n-1)$ matrix
\begin{align}\label{eq1018_3}
A=\begin{pmatrix} 1 & 0   &\ldots & 0\\
0 & 1  & \ldots & 0\\
\vdots & \vdots   & &\vdots\\
0 & 0   & \ldots & 1\\
-1 & -1   & \ldots & -1\end{pmatrix},
\end{align}then $\Xi_{n,A,v}(b_1,\ldots, b_n)$ is just the function defined in
\eqref{eq1017_2} with $b_i=c_i$ for $1\leq i\leq n-1$ and
$\sum_{i=1}^n c_n=0$.

By the convexity of $\Xi_{n,A,v}\left(s; b_1,\ldots, b_j\right)$, we
can conclude that the region $\Omega_{s, A, v}^-$ of $(b_1,\ldots,
b_j)\in \R^j$ where $\Xi_{n,A,v}\left(s; b_1,\ldots, b_j\right)<0$
is a convex, and therefore connected region. Using the fact that at
fixed $\prod_{i=1}^na_i$, the minimum of $\Xi_n(s; a_1,\ldots, a_n)$
appears at $a_1=\ldots=a_n$, one can even conclude that if
$\Omega_{s, A, v}^-$ is nonempty and if $b=\hat{b}$ is a solution of
the system
\begin{align}\label{eq1018_4}
Ab+v=\left[\frac{1}{n}\sum_{i=1}^n
v_i\right]\begin{pmatrix}1\\1\\\vdots\\1\end{pmatrix},
\end{align} then $\hat{b}\in
\Omega_{s,A,v}^-$. When $A$ is given by \eqref{eq1018_3} and $v=0$,
the system \eqref{eq1018_4} has a unique solution $b=0$. Therefore,
the region of $(c_1,\ldots, c_{n-1})\in \R^n$ where the function
$\Xi_n\left(s; e^{c_1}, \ldots, e^{c_{n-1}},
e^{c-c_1-\ldots-c_{n-1}}\right)$ \eqref{eq1017_2} is negative is a
convex connected region containing the origin
$c_1=\ldots=c_{n-1}=0$.

As a second example, suppose  the variables $(a_1,\ldots, a_n)$ in
$\Xi_n(s; a_1,\ldots, a_n)$ are such that $a_1:\ldots : a_n
=1:k_2:\ldots:k_n$ and $\prod_{i=1}^n a_i=1$. A simple computation
shows that under these conditions, $a_1,\ldots, a_n$ can be
expressed as functions of $(k_2, \ldots, k_n)$:
\begin{align*}
a_1=\frac{1}{ \prod_{i=2}^n k_i^{\frac{1}{n}}},
\hspace{0.5cm}a_2=\frac{k_2^{\frac{n-1}{n}}}{\prod_{i=3}^n
k_i^{\frac{1}{n}}},\hspace{0.5cm}\ldots,\hspace{0.5cm}
a_n=\frac{k_n^{\frac{n-1}{n}}}{\prod_{i=2}^{n-1}k_i^{\frac{1}{n}}}.
\end{align*}Therefore the function $\Xi_n(s; a_1,\ldots, a_n)$ can
be regarded as a function of $\log k_i$, $2\leq i\leq n$ with
corresponding $A$ and $v$ given by
$$A=\begin{pmatrix}-\frac{1}{n} & -\frac{1}{n} & \ldots & -\frac{1}{n}\\
\frac{n-1}{n} & -\frac{1}{n} & \ldots & -\frac{1}{n}\\
\vdots & \vdots & & \vdots\\
-\frac{1}{n} & -\frac{1}{n} & \ldots &\frac{n-1}{n}
\end{pmatrix}$$ and $v=0$. Consequently, the region $\Omega_{s,n}^-$
where $\Xi_n(s; a_1,\ldots, a_n)<0$, plotted with respect to the
variables $\log k_2,\ldots, \log k_n$ is a convex and connected
region containing the point $\log k_2=\ldots=\log k_n=0$. Reducing
the number of variables by setting some of the variables $k_i$ to be
equal to a constant or setting $k_i=k_j$ for some pairs of $i\neq j$
are tantamount to restricting the variables $(\log k_2, \ldots, \log
k_n)$ to the intersections of hyperplanes in $\R^{n-1}$. Therefore
plotting the region $\Omega_{s,n}^-$ with respect to the   remaining
$\log k_i$ variables, the new region is still convex and connected.
In \cite{e10}, we have plotted the regions where $\Xi_3(s; a_1, a_2,
a_3)<0$ with respect to the variables $\log k_2$ and $\log k_3$, and
the regions where $\Xi_4(s; a_1,a_2,a_3, a_4)<0$ with respect to the
variables $\log k_2, \log k_3$, with $k_4=1$ and $k_4=3$, for some
values of $s$. The graphs show that these regions are indeed convex
and connected. As a matter of fact, when the regions
$\Omega_{s,n}^-$ are plotted using computer softwares, we can only
determine the regions  in finite domains of the $\log k$ variables.
Our convexity and connectivity results guarantee that there does not
exists region of $\Omega_{s,n}^-$ outside the finite domain we
consider.

Finally, notice that the map $(a_1,\ldots, a_n)\in (\R^+)^n \mapsto
(\log a_1,\ldots, \log a_n)\in \R^n$ is continuous. On the other
hand, we have shown that the intersection of the region
$\Omega_{s,n}^-$ where $\Xi_{n}(s; a_1,\ldots, a_n)<0$ with any
hypersurface in $(\R^+)^n$ of the form $\prod_{i=1}^n a_i=a$ a
constant, contains the point $a_1=\ldots=a_n=a^{\frac{1}{n}}$. These
allow one to conclude that the region $\Omega^-_{s,n}$, as a region
of $(a_1,\ldots, a_n)\in (\R^+)^n$, is a connected region that
contains the ray $a_1=\ldots=a_n$.

\section{Concluding remarks}

This paper is motivated by  our recent work \cite{e10}, in which we
need to determine the conditions for which symmetry breaking occurs
in the $\lambda\varphi^4$--interacting fractional Klein--Gordon
field theory on a toroidal spacetime $T^n\times \R^{N}$. The results
obtained in this work are used in \cite{e10}. Since the Epstein zeta
function $Z_n(s; a_1,\ldots, a_n)$ always appears when one studies
field theories on toroidal manifolds or rectangular cavities, the
results presented here may have  potential applications for other
works along these directions. For physics applications, one is
usually interested in simple toroidal manifolds of the form
$T^n\times \R^N$, which can be considered as  quotients of $\R^n$ by
rectangular lattices. This explains why we consider Epstein zeta
function of the form $Z_n(s; a_1,\ldots, a_n), a\in (\R^+)^n$
instead of the general Epstein zeta function $Z_n(A;s)$ with $A$ any
positive definitive symmetric matrix. An advantage of this
simplification is that we can determine the minimum of $Z_{n}(s;
a_1,\ldots, a_n), a\in (\R^+)^n, \prod_{i=1}^n a_i=1$ for all
$s\in\R\setminus \{0,n/2\}$ and all $n\in \mathbb{N}$.  In contrast,
for general Epstein zeta function $Z_n(A;s)$ with $\det A=1$, only
some local minima have been determined for $n=2,3,4,5,6,7,8,24$ with
$s$ in some range of $\R$ [15--38]. In fact, for all these known
cases, the minimum of $Z_n(s; a_1, \ldots, a_n), a\in (\R^+)^n,
\prod_{i=1}^n a_i=1$ is no longer a local minimum in the larger
domain where $A\in \left\{\text{positive definite
symmetric}\;n\times n\;\text{matrices}\right\}$. In the attempt to
search for the correct spacetime model, it might turn out that the
general toroidal manifold $\R^n/L$, where $L$ is any lattice in
$\R^n$, will be of importance to physics. In that case, we need to
extend the work of this paper to general Epstein zeta function
$Z_n(A;s)$. It is hoped that the methods and results in this paper
will be useful for the study of this general problem.

Our result about the convexity of the Epstein zeta function $Z_n(s;
a_1,\ldots, a_n)$, as a function of $\log a_1,\ldots, \log a_n$ with
$\log a_1+\ldots+\log a_n$ fixed,   seems to be new. This result is
important for determining the regions where $Z_n(s; a_1,\ldots,
a_n)$ is positive or negative, as discussed at the end of section 4.
As a matter of fact, we have obtained a more general result
regarding the convexity of functions $\mathcal{F}(x_1,\ldots, x_n)$
of the form $\mathcal{F}(x_1,\ldots, x_n)=\prod_{i=1}^n f(x_i)$,
where $f(x)$ is a positive   function that has continuous second
derivative. This result may have applications in other areas of
mathematics.

 \vspace{1cm} \noindent \textbf{Acknowledgement}\;
The authors would like to thank Malaysian Academy of Sciences,
Ministry of Science, Technology  and Innovation for funding this
project under the Scientific Advancement Fund Allocation (SAGA) Ref.
No P96c.

\end{document}